\newcommand{\wt}{\mathrm{wt}}
\title{String commitment from unstructured noise}
\author[1]{Jiawei Wu}
\author[2,3,4]{Masahito Hayashi}
\author[1,5]{Marco Tomamichel}
\affil[1]{Centre for Quantum Technologies, National University of Singapore, Singapore}
\affil[2]{The Chinese University of Hong Kong, Shenzhen, Longgang District, Shenzhen, 518172, China}
\affil[3]{International Quantum Academy, Futian District, Shenzhen 518048, China}
\affil[4]{Graduate School of Mathematics, Nagoya University, Chikusa-ku, Nagoya 464-8602, Japan}
\affil[5]{Department of Electrical and Computer Engineering, National University of Singapore, Singapore}
\date{}
\begin{document}
\maketitle
\begin{abstract}
    Noisy channels are a foundational resource for constructing cryptographic primitives such as string commitment and oblivious transfer. The noisy channel model has been extended to unfair noisy channels, where adversaries can influence the parameters of a memoryless channel.
    In this work, we introduce the unstructured noisy channel model as a generalization of the unfair noisy channel model to allow the adversary to manipulate the channel arbitrarily subject to certain entropic constraints.
    We present a string commitment protocol with established security and derive its achievable commitment rate, demonstrating the feasibility of commitment against this stronger class of adversaries.
    Furthermore, we show that the entropic constraints in the unstructured noisy channel model can be derived from physical assumptions such as noisy quantum storage. 
    Our work thus connects two distinct approaches to commitment, i.e., the noisy channel and physical limitations.
\end{abstract}

\section{Introduction}
\subsection{Commitments}
Commitments are a typical two-party cryptographic primitive, serving as a critical component for various cryptographic applications, such as coin flipping~\cite{blum1983coin,demay2013unfair}, zero-knowledge proofs~\cite{brassard1988minimum,goldreich1991proofs}, and secure multiparty computation~\cite{crepeau2020commitment,ben-or2006secure,dupuis2010secure,goldreich2019how}. 
Commitments allow a party, Alice, to commit to a specific value (usually a bit or bit string) in a way that keeps the value hidden from another party, Bob, until a later point when Alice chooses to reveal the committed value. The two critical properties of commitments are the hiding and the binding properties.
(1) The hiding property ensures that Bob is ignorant of the committed value during the commit phase.
(2) The binding property ensures that, once the commitment has been established, Alice is precluded from altering the value she intends to disclose.
A common analogy for commitment involves Alice locking a message inside a container and sending it to Bob. 
At this stage, Bob remains unaware of the actual content of the message. 
Later, upon Alice's provision of the corresponding key, Bob can unlock the container and verify the committed value.

The study of commitment traces back to Blum's foundational work~\cite{blum1983coin}, where commitment is used to implement coin flipping and is shown to be secure under the assumption that factoring is hard. 
In classical settings, commitments can be achieved under certain computational assumptions, such as existence of one-way functions, given that the adversary's computational power is limited.
While research based on these two types of assumptions has been fruitful, 
the success of quantum key distribution suggests that the computational assumptions can be removed for some cryptographic tasks.
However, even with quantum communication, common two-party cryptographic tasks, including commitment and secure function evaluation, have been proven to be impossible~\cite{lo1997quantum,mayers1997unconditionally,lo1998why,winkler2011commitment}.

Research has since considered physical assumptions that make commitment possible, such as special relativity~\cite{kent1999relativistic,croke2012relativistic,kaniewski2013relativistic,lunghi2013relativistic,lunghi2015relativistic}, bounded quantum storage~\cite{damgard2008cryptography,wehner2008composable,barhoush2023powerful} and noisy quantum storage~\cite{wehner2008cryptography,wehner2008cryptographyb,schaffnerchristian2009robust,konig2012unconditional}.
On the other hand, a parallel line of work has focused on constructing commitment from abstract noisy channels. These two approaches—one based on concrete physical limitations and the other on abstract channel properties—have largely evolved independently.
A natural question is whether a more general channel model can unify these two fundamentally different types of constraints. This is the primary motivation for the present work.

\subsection{The Noisy Channel Model}
The inherent noise of channels, which is independent of adversarial influence, has been noticed to be useful for cryptographic tasks.
Wyner's wiretap channel model~\cite{wyner1975wiretap} and its generalization~\cite{csiszar1978broadcast} utilize the noisy gap between two channels to implement secure communication in the presence of an eavesdropper.
More recent works demonstrate that noisy channels can support various two-party cryptographic primitives~\cite{crepeau1997efficient}, including bit commitment and oblivious transfer.

From an information-theoretical viewpoint, it is important to determine the optimal rate of a primitive constructed from a channel.
Specifically, for string commitment, the commitment rate is defined as the ratio between the number of bits committed and the time of channel use (see equation \eqref{eq:ratedef} for a formal definition).
Both string commitment capacity~\cite{winter2003commitment,crepeau2005efficient,hayashi2022commitment,hayashi2023commitment} and string oblivious transfer capacity~\cite{crepeau2005efficient,ishai2011constantrate,dowsley2017oblivious} have been extensively studied. For example, the commitment capacity of a binary symmetric channel with transition probability $p$ is $C = h(p)$, where $h(\cdot)$ is the binary entropy function.

In more realistic scenarios, adversaries may have partial control over the channel, potentially influencing its error rate. 
This inspired the study of two-party cryptographic primitives based on unfair noisy channel model~\cite{damgard1999im,cascudo2016oblivious,khurana2016secure,crepeau2020commitment}. 
Specifically, unfair noisy channel model considers the following setting: 
\begin{itemize}
    \item If Alice and Bob are both honest, then the channel is a binary symmetric channel with transition probability $\delta$.
    \item If one party, either Alice or Bob, is dishonest, then she (he) can configure the transition probability as any value in interval $[\gamma, \delta]$, while keeping the honest party ignorant about this value.
\end{itemize}
Compared with a fixed binary symmetric channel, this model gives the dishonest party the option to reduce channel error, which enhances the cheating ability. 
Cr\'{e}peau et al.\cite{crepeau2020commitment} derived the commitment capacity of the unfair noisy channel model: $C = h(\delta) - h(\frac{\delta -\gamma}{1-2\gamma})$.

A key limitation of the unfair noisy channel model,  is the i.i.d. assumption even in the dishonest case. Specifically, a malicious adversary with control over the physical medium may not be restricted to simply tuning a single error parameter. Rather, he could introduce memory effects to maximize his advantage. The assumption of a memoryless adversarial channel is a mathematical convenience but may not hold in realistic scenarios. This motivates the development of a model that captures such non-i.i.d. behavior.

\subsection{Contributions}
Our contributions are two-fold.
First, we introduce and formalize the unstructured noisy channel (UsNC) model for commitment task, which generalizes prior unfair noisy channel models by removing the i.i.d. assumption of the adversarial channel.
In the UsNC model, the channel remains memoryless when both Alice and Bob are honest, while a dishonest party may introduce arbitrary memory effects, constrained by global entropic bounds.
Within this model, we propose a string commitment protocol (Protocol~\ref{pt:com}) inspired by~\cite{winter2003commitment,imai2006efficient},
and formally prove its security by establishing its completeness, hiding, and binding properties (Theorem~\ref{thm:main}).

Second, we demonstrate that the UsNC model can be directly instantiated by the concrete physical constraints of the Noisy-Quantum-Storage (NQS) model.
In Section~\ref{sec:Application}, we establish the conditions of the UsNC model from noisy quantum storage, thereby showing an alternative commitment scheme from NQS model other than~\cite{konig2012unconditional}.
This connection successfully bridges two important, but largely separate lines of research in two-party cryptography.

\section{Preliminaries}
In this section, we introduce some necessary notations, entropy quantities and some useful lemmas on information theory. 
For some definitions of classical version that can be easily derived from their quantum counterpart (like the generalized trace distance), we will only present the quantum version. Table~\ref{tab:notations} is a summary of selected notations and symbols.

\subsection{Basic notations}
Let $\cB(\cH)$ be the set of bounded operators on Hilbert space $\cH$. A (quantum) state is a positive semidefinite operator $\rho \in \cB(\cH)$ with normalized trace value $\tr(\rho) =0$. The set of all states on $\cH$ is denoted by $\cD(\cH)$. $\rho$ is called a subnormalized state if $0 \le \rho \le 1$.
The set of all subnormalized states on $\cH$ is denoted by $\cD_{\bullet}(\cH)$. 
A classical state of random variable $X$ can be expressed as $\rho_X = \sum_{x \in \cX} P_{X}(x) \proj{x}$. The joint state of a classical system $X$ and a quantum system $B$ is called a cq state, which takes the form $\rho_{XB} = \sum_{x} P_{X}(x) \proj{x} \otimes \rho_{B|x} $.

For distance measure on subnormalized states, we use generalized trace distance, defined as
\begin{align}
    \GTD(\rho,\sigma) = \frac{1}{2} \| \rho - \sigma \|_1 + \frac{1}{2} \left| \tr(\rho) - \tr(\sigma) \right|. \notag
\end{align}
It is shown in~\cite[Section 3.1.2]{tomamichel2015quantum} that $\GTD(\cdot,\cdot)$ is a metric. If $P,Q$ are normalized, then $\GTD(P,Q)$ reduces to the normal total variational distance.

For a finite set $\cX$, let $\Delta(\cX)$ be the probability simplex and $\Delta_{\bullet}(\cX)$ be the subnormalized simplex. They are defined as
\begin{align*}
    \Delta(\cX) \coloneqq \left\{ P:\cX \to [0,1] \middle| \sum_{x\in \cX} P(x) = 1 \right\}, \\
    \Delta_{\bullet}(\cX) \coloneqq \left\{ P:\cX \to [0,1] \middle| \sum_{x\in \cX} P(x) \le 1 \right\}.
\end{align*}
The elements in $\Delta_{\bullet}$ are called subnormalized distribution.
The uniform distribution on set $\cX$ is denoted by $U_{\cX}$.

For two discrete probability distributions $P,Q$ on set $\cX$, 
their generalized trace distance reduces to
\begin{equation}
\begin{aligned} \label{eq:deftd}
    \GTD(P,Q) &\coloneqq \frac{1}{2} \| P-Q\|_1 + 
    \frac{1}{2} \left| \| P \|_1 -\|Q\|_1 \right| \\
    &= \frac{1}{2} \sum_{x\in \cX} |P(x)-Q(x)| + \frac{1}{2}\left|\sum_{x \in \cX} P(x)-Q(x) \right|.
\end{aligned}
\end{equation}

A classical-quantum channel $W$ (cq channel for short) maps a finite set $\cX$ to a quantum state $W(x)$. 
For random variable $X$ subjected to distribution $P_X$, $W \times P_X$ denotes the joint state $\rho_{XB} = \sum_{x} P_{X}(x)\proj{x}\otimes \rho_{B|x}$.
and $W \circ P_X$ denotes the marginal output state on $B$, that is, $\sum_{x} P_X(x) \rho_{B|x}$.

\subsection{Entropic quantities}

We denote the von Neumann entropy (or Shannon entropy) of state $\rho_A \in \cD_{\bullet}(\cH_A)$ as $H(A)_\rho$.
The min-entropy of $\rho_A \in \cD_{\bullet}(\cH_A)$ and conditional min-entropy of $\rho_{AB} \in \cD_{\bullet}(\cH_{AB})$ are defined as
\begin{align}
    H_{\min}(A)_\rho &\coloneqq -D_{\infty}(\rho_A \| \mathbb{I}_A), \label{eq:q-min}\\
    H_{\min}(A|B)_\rho &\coloneqq  \max_{\sigma_B \in \cD(\cH_B)}  -D_{\infty}(\rho_{AB}\| \mathbb{I}_A \otimes \sigma_B), \label{eq:q-cmin}
\end{align}
where
$D_{\infty}(\rho \| \sigma) \coloneqq \inf \{ \lambda \in \mathbb{R}: \rho \le 2^\lambda \sigma\}$.

As a special case, for classical distributions $P_X$ and $P_{XY}$, equations~\eqref{eq:q-min} and \eqref{eq:q-cmin} simplify to
\begin{align}
    H_{\min}(X)_P & \coloneqq -\log \max_{x \in \cX} P_X(x), \label{eq:defmin}\\
    H_{\min}(X|Z)_P & \coloneqq - \log \sum_{z \in \ca{Z}}\max_{x \in \ca{X}}P_{X,Z}(x,z), \label{eq:def-c}
\end{align}
where $\log$ is taken with base $2$.

Accordingly, the smooth min-entropy of $\rho_A$ and smooth conditional entropy of $\rho_{AB}$ are respectively defined as 
\begin{align*}
    H_{\min}^{\varepsilon}(A)_\rho &\coloneqq  
    \max_{\substack{\rho_{A}' \in \cD_{\bullet}(\cH_{A})  \\
    \GTD(\rho_{A}',\rho_{A}) \le \varepsilon}} H_{\min}(A)_{\rho'},  \\
    H_{\min}^{\varepsilon}(A|B)_\rho &\coloneqq
    \max_{\substack{\rho_{AB}' \in \cD_{\bullet}(\cH_{AB}) \\
    \GTD(\rho_{AB}',\rho_{AB})\le \varepsilon}} H_{\min}(A|B)_{\rho'}.
\end{align*}

A classical-quantum channel $W$ (cq channel for short) maps a finite set $\cX$ to a quantum state $W(x)$. 
For random variable $X$ subjected to distribution $P_X$, $W \times P_X$ denotes the joint state $\rho_{XB} = \sum_{x} P_{X}(x)\proj{x}\otimes \rho_{B|x}$,
and $W \circ P_X$ denotes the marginal output state on $B$, that is, $\sum_{x} P_X(x) \rho_{B|x}$.

\subsection{Typical set}
We also need the concept of (weakly) conditional typical set. 
The sequence $x_1 x_2 \dots x_n$ is denoted by $\bx$ whenever the length $n$ is obvious from the context. 
For an input sequence $\bx \in \{0,1\}^n$ and binary symmetric channel (BSC) $W$, the  (weakly) conditional typical set is defined\footnote{The standard definition of weakly conditional typical set reduces to this Hamming distance form for binary symmetric channel.} as
\begin{align} \label{eq:defsct}
    \cT_{W,\varepsilon}^n (\bx) \coloneqq \left\{ \bz \in \{0,1\}^n \mid n(p-\varepsilon) \le \HD(\bx,\bz) \le n(p+\varepsilon) \right\},
\end{align}
where $\HD(\cdot,\cdot)$ is Hamming distance.
Because $\HD(\bx, \bz) = \HD(\bx + \by, \bz +\by)$, we have the following translation invariance property
\begin{align}
    \bz \in \cT_{W,\varepsilon}^n (\bx) \iff \bz + \by \in \cT_{W,\varepsilon}^n (\bx + \by). \label{eq:SYM}
\end{align}
The following lemma suggests that the output sequence of channel $W$ stays in the conditional typical set with large probability.
\begin{lemma}[\textnormal{\cite[inequality (13)]{winter2003commitment}}, adapted] \label{lem:unit}
Let $W$ be a binary symmetric channel. For any $\bx \in \{0,1\}^n$ and $\varepsilon >0$,
\begin{align*}
    \Pr_{\bZ \sim W^n(\bx)}[\bZ \in \cT_{W, \varepsilon}^n(\bx)] \ge 1 - 8 \cdot 2^{- n \varepsilon^2}.
\end{align*}
\end{lemma}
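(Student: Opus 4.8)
The plan is to reduce the statement to a standard concentration bound for a sum of i.i.d.\ Bernoulli random variables. First I would exploit the structure of the binary symmetric channel: writing $S_n \coloneqq \HD(\bx, \bZ)$, the definition of $W$ implies that, under $\bZ \sim W^n(\bx)$, the quantity $S_n$ is distributed as $\sum_{i=1}^n E_i$ with $E_i$ i.i.d.\ $\mathrm{Bernoulli}(p)$ — and crucially this distribution does not depend on the input $\bx$, which is what makes the final bound uniform over $\bx$. By the definition in~\eqref{eq:defsct}, the event $\bZ \in \cT_{W,\varepsilon}^n(\bx)$ is exactly the event $\{\, n(p-\varepsilon) \le S_n \le n(p+\varepsilon) \,\}$, i.e.\ $\{\, |S_n - np| \le n\varepsilon \,\}$.

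Next I would apply Hoeffding's inequality to the independent bounded variables $E_i \in [0,1]$ with mean $p$: with deviation $t = n\varepsilon$ and range sum $\sum_i (1-0)^2 = n$, this gives $\Pr[\, |S_n - np| > n\varepsilon \,] \le 2\exp(-2n^2\varepsilon^2/n) = 2\exp(-2n\varepsilon^2)$. Taking the complement yields $\Pr[\bZ \in \cT_{W,\varepsilon}^n(\bx)] \ge 1 - 2\exp(-2n\varepsilon^2)$.

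It then remains to massage this into the stated base-$2$ form. Since $2\exp(-2n\varepsilon^2) = 2\cdot 2^{-2n\varepsilon^2\log_2 e}$ and $2\log_2 e \approx 2.885 \ge 1$, we have $2^{-2n\varepsilon^2\log_2 e} \le 2^{-n\varepsilon^2}$ for all $n$ and $\varepsilon$, hence $2\exp(-2n\varepsilon^2) \le 2\cdot 2^{-n\varepsilon^2} \le 8\cdot 2^{-n\varepsilon^2}$, which is the claimed bound.

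I do not expect a genuine obstacle: once the channel noise is identified with an i.i.d.\ Bernoulli sequence, the argument is a routine Chernoff/Hoeffding estimate, and the generous constant $8$ (together with the large slack between $2\log_2 e$ and $1$) leaves ample room. The only mildly delicate point is cosmetic — reconciling the constant and base with the statement. If one instead wanted to mirror the method-of-types proof of the cited reference (valid for general discrete memoryless channels), one would have to absorb the polynomial prefactor $n+1$ coming from counting type classes before invoking Pinsker's inequality; the direct Hoeffding route for the BSC special case sidesteps this entirely, which is why I would take it here.
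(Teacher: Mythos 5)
Your proof is correct. Note that the paper itself does not prove this lemma at all: it imports it verbatim from inequality (13) of Winter--Nascimento--Imai (the ``adapted'' in the statement), where the bound is established for a general discrete memoryless channel by concentrating each empirical joint count $N(x,z\,|\,\bx,\bz)$ separately and union-bounding over the $|\cX||\cZ|$ symbol pairs --- which is where the constant $8 = 2\,|\cX||\cZ|$ originates in the binary case. Your route instead exploits the BSC structure directly: since $\HD(\bx,\bZ) = \sum_i E_i$ with $E_i$ i.i.d.\ $\mathrm{Bernoulli}(p)$ independently of $\bx$, and the typical set in \eqref{eq:defsct} is exactly the event $|S_n - np| \le n\varepsilon$, a single two-sided Hoeffding bound gives $2\exp(-2n\varepsilon^2) = 2\cdot 2^{-2n\varepsilon^2\log_2 e}$, which is strictly stronger than the stated $8\cdot 2^{-n\varepsilon^2}$ (the exponent improves by a factor $2\log_2 e \approx 2.885$ and the prefactor by $4$); your relaxation to the stated form is valid since $n\varepsilon^2 \ge 0$. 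What your approach buys is a short, self-contained proof with a better constant for the special case actually used in the paper; what the cited approach buys is generality to arbitrary DMCs, which is relevant to the paper's closing remark that extending Lemma~\ref{lem:typ} beyond the BSC is the real obstacle to generalizing the model, not this completeness lemma.
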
 

\begin{table}[t!]
\caption{Selected notations and symbols}
\label{tab:notations}
\centering

\renewcommand{\arraystretch}{1.3}
\begin{tabular}{@{}l p{6.4cm} l@{}}
\toprule
\textbf{Symbol} & \textbf{Meaning} \\
\midrule
$\cB(\cH)$ & Set of bounded operators on $\cH$ \\
$\cD(\cH)$  & Set of density operators on $\cH$ \\
$\cD_{\bullet}(\cH)$  & Set of subnormalized density operators on $\cH$ \\
$\GTD(\cdot, \cdot)$  & Generalized Trace Distance     \\
$H_{\min}(A)_\rho$        & Min-entropy of system $A$                 \\
$H_{\min}^{\varepsilon}(A|B)_\rho$ & Smooth conditional min-entropy  \\
$\HD(\cdot, \cdot)$ & Hamming distance  \\
$\cT_{W,\varepsilon}^n(\bx)$ & (Weakly) Conditional typical set \\
$\theta$                  & UsNC parameters $(p, \varepsilon_A, l_A, \varepsilon_B, l_B)$  \\
$W^n$                     & Honest channel ($n$-fold BSC)            \\
$W^A, l_A$                & Dishonest Alice's channel \& entropy bound \\
$W^B, l_B$                & Dishonest Bob's channel \& entropy bound \\
$\cC$          & Linear code in $\{0,1\}^n$  \\
$\cC'$          & Coset of linear code $\cC$  \\
\bottomrule
\end{tabular}
\end{table}

\section{String Commitment from Unstructured Noisy Channel Model}
In this section, we first introduce a formal definition of the UsNC model and the requirements of string commitment in Section \ref{sec:TD}. Then we describe our protocol to construct string commitment from the UsNC model in Section \ref{sec:PBC}. 

\subsection{Task Definition} \label{sec:TD}
In two-party cryptography, it is standard practice to consider three adversarial scenarios: (1) both Alice and Bob are honest; (2) only Alice is dishonest; (3) only Bob is dishonest.
Accordingly, we define the UsNC model in Definition \ref{def:UsNC} to capture all three cases.
Throughout, we always assume a bidirectional noiseless channel is available. 

\begin{definition}[The unstructured noisy channel model] \label{def:UsNC}
    Let $p\in (0,\frac{1}{2}), n \in \mathbb{N}$. $\varepsilon_A, l_A, \varepsilon_B, l_B$ are non-negative real numbers, and $\theta:=(p,\varepsilon_A ,l_A,\varepsilon_B,l_B)$.
    An $(n,\theta)$-UsNC is characterized by the following conditions:
    \begin{description}
    \item[(C1)] If both Alice and Bob are honest, the channel from Alice to Bob is an $n$-fold binary symmetric channel (BSC) $W^n$ with transition probability $p$, where $W: \ca{X} \to \ca{Z}$ and $\cX=\cZ=\{0,1\}$.
    \item[(C2)] If Alice is dishonest, she can manipulate the channel to $W^A: \hat{\ca{X}} \to \cZ^n$ under the constraint: 
    \begin{align} \label{eq:ch2}
        \forall x \in \hat{\cX}, H_{\min}^{\varepsilon_A}(\bZ)_{ W^A(x)} \ge l_A,
    \end{align} 
    where $\hat{\cX}$ is an arbitrary set.
    \item[(C3)] If Bob is dishonest, he can manipulate the channel to a cq channel $W^B: \cX^n \to \cD(\cH_{B})$ under the constraint: 
    \begin{align} \label{eq:ch3}
    H_{\min}^{\varepsilon_B}(\bX|B)_{W^B \times U_{\cX^n}} \ge l_B,
    \end{align}
    where $B$ is an arbitrary quantum system.
\end{description}
\end{definition}

\begin{remark}
In condition \textbf{(C2)}, allowing an arbitrary input set models the fact that Alice has full control over the channel input.
The entropy constraint \eqref{eq:ch2} prevents the output distribution from being too concentrated.
In condition \textbf{(C3)}, the arbitrary quantum system $B$ models the fact that Bob has full control over the channel output.
The entropy constraint \eqref{eq:ch3} guarantees that the channel output does not reveal too much information about Alice's input.
Notably, the two conditions are different in terms of entropic quantities. 
In Section \ref{sec:proofBC}, we will see that this difference comes from the fundamental asymmetry in requirements for binding and hiding.
\end{remark}

Next, we formally define a string commitment protocol in the $(n,\theta)$-UsNC model.
String commitment protocol consists of a commit phase and a reveal phase.
Each phase involves certain rounds of interactions, while the commit phase includes additional use of the noisy channel.
We represent a general string commitment protocol with interactive random systems $\alpha_C$, $\alpha_R$, $\beta_C$,  $\beta_R$\footnote{These interactive systems can be further characterized by the quantum comb model~\cite{gutoski2007toward,chiribella2009theoretical}}, as shown in Figure~\ref{fig:bc}(a). 
In this representation, a message $M \in \cM$ is committed through the interaction between $\alpha_C$ and $\beta_C$.
In the reveal phase, honest Bob's system $\beta_R$ outputs the revealed message $\hat{M}$ and a flag bit $F \in \{\text{``acc''},\text{``rej''} \}$ representing accepting or rejecting the message.

\begin{figure}
    \centering
    \input{pic/general.tex}
    \caption{
    (a) The honest protocol is modelled by random interactive systems. $\alpha_C,\alpha_R$ are Alice's protocol in the commit phase and reveal phase. 
    $\beta_C,\beta_R$ are Bob's protocol in the commit phase and reveal phase. 
    The arrow $\nleftrightarrow$ denotes multi-rounds of interactions through the noiseless channel. The order of the use of noiseless channel and the use of noisy channel $W$ can be arbitrary. 
    Variables $R_\alpha, R_\beta$ denote all registers of system $\alpha_C,\beta_C$, respectively. $\hat{M}$ is the message inferred by $\beta_R$. $F \in \{\textrm{``acc''},\textrm{``rej''} \}$ is a flag representing the acceptance or rejection of the revealed message.
    (b) An illustration of dishonest Bob in the commit phase. $T_C$ is the Bob's transcript after interacting with Alice through $\cB_C$.
    (c) An illustration of dishonest Alice. $\cA_{R0}, \cA_{R1}$ and $\cA_C$ are interactive systems representing Alice's strategy. $\beta_{R0}$ and $\beta_{R1}$ are two copies of $\beta_R$.}
    \label{fig:bc}
\end{figure}

The requirements for string commitment are:
\begin{enumerate}
    \item Completeness: When both Alice and Bob are honest, the protocol proceeds as in Figure~\ref{fig:bc}(a). The completeness parameter $\delta_c$ is defined as
    \begin{align*}
    \delta_c:=
    \max_{m\in \ca{M}} 1- \Pr[\hat{M}= m \land F = \textrm{``acc''}].
    \end{align*}
    \item Hiding: When Bob is dishonest, denote his cheating strategy at commit phase as $\cB_C$.
    As shown in Figure~\ref{fig:bc}(b), $\cB_C$ is a probabilistic system that interacts with honest Alice's system $\alpha_C$. Let $B'$ be all the information of Bob at the end of commit phase when  Bob applies strategy $\cB_C$. The hiding parameter $\delta_h$ is defined as
    \begin{align*}
    \delta_h:=
        \max_{\cB_C}
        \max_{m\ne m'\in \ca{M}}
        \GTD(\rho_{B|m}, \rho_{B|m'}).  
    \end{align*}
    \item Binding: This condition corresponds to soundness from Bob's viewpoint. When Alice is dishonest, she applies the cheating strategy  $\cA_C$ to interact with $\beta_C$ in the commit phase and two strategies $\cA_{R0},\cA_{R1}$ to interact with $\beta_{R0}, \beta_{R1}$ respectively in the reveal phase, as shown in Figure~\ref{fig:bc}(c). 
    Alice succeeds when two different messages are accepted by Bob.
    Thus, the binding parameter $\delta_b$ is defined as
    \begin{align*}
    \delta_b:=
    \max_{\cA_C,\cA_{R0}, \cA_{R1}}
    \Pr
    \left[
        \begin{array}{l}
        F_0 = F_1 = \text{``acc'' } \land \\
        \hat{M}_0 \ne \hat{M}_1 
        \end{array}
    \right], 
    \end{align*}
    where $F_0,F_1,\hat{M}_0,\hat{M}_1$ are the output variables of $\beta_{R0}$ and $\beta_{R1}$.
\end{enumerate}

For commitment protocol $\Phi$ built from $(n,\theta)$-UsNC, we write $\delta_c,\delta_h,\delta_b$ as
$\delta_c(n,\theta,\Phi)$,
$\delta_h(n,\theta,\Phi)$,
$\delta_b(n,\theta,\Phi)$.

In this section, we aim to find a protocol $\Phi$
such that
$\delta_c(n,\theta,\Phi)$,
$\delta_h(n,\theta,\Phi)$, and
$\delta_b(n,\theta,\Phi)$ are sufficiently small. The string commitment rate is defined as 
\begin{align} \label{eq:ratedef}
    R \coloneqq \frac{\log |\cM|}{n}.
\end{align}

\begin{definition}[Achievable commitment in the UsNC model] \label{def:crate}
  For a sequence of UsNC models $\{(n,\theta_n)\}_{n\in \mathbb{N}}$, a rate $R$ is said to be achievable  if, for any $R'<R$, there exists a sequence of protocol $\{\Phi_n \}_{n \in \mathbb{N}}$ such that $\Phi_n$ constructs string commitment on message set $\cM_n$ and 
\begin{align*}
    \lim_{n \to \infty} \delta_{c}(n,\theta_n,\Phi_n) &= 0,\\
    \lim_{n \to \infty} \delta_{h}(n,\theta_n,\Phi_n) &=0,\\
    \lim_{n \to \infty} \delta_{b}(n,\theta_n,\Phi_n) &= 0,\\
    \lim_{n \to \infty} \frac{\log |\cM_n|}{n} &= R'.
\end{align*}
  
\end{definition}

\subsection{Protocol} \label{sec:PBC}

In this section we will describe our string commitment protocol and our main result on its completeness, hiding property and binding property.

Suppose $\varepsilon>0$ is an arbitrarily small number, 
$\cM$ is the message set with an additive group structure.
$\cC \subset \{0,1\}^n$ is a linear code with Hamming distance $\HD(\cC) = d < n/2$, where $\HD(\cC) \coloneqq \min_{\bx \ne \by \in \cC} \HD(\bx ,\by)$.
A coset $\cC' \in \cX^n / \cC$ can be constructed by selecting an arbitrary element $\bx_{\cC'} \in \cC'$ so that $\cC' = \cC + \bx_{\cC'}$. For each $\cC'$, we fix the element $\bx_{\cC'}$ and call it the representative element of $\cC'$.

Let $F=\{f_s\}_{s\in \ca{S}} : \cC \to \ca{M}$ be a balanced 2-universal hash function (UHF) family\footnote{The function family $\{f_s\}_{s\in \cS} : \cC \to \cM$ is a balanced 2-universal hash function family if (1) $\forall m, m' \in \cM, s\in \cS, |f_s^{-1}(m)|=|f_s^{-1}(m')|$ and (2) $\forall c_1 \ne c_2 \in \cC, \Pr[f_S(c_1) = f_S(c_2)] \le 1/|\cM|$ when $S$ is uniformly distributed on $\cS$.}.
Define the random map $\Gamma_s: \cM \to \cC$ as 
\begin{align*}
    \Gamma_s (c|m) = \frac{|\cM|}{|\cC|} \mathbbm{1}[f_s(c)=m],
\end{align*}
where $\mathbbm{1}[\cdot]$ is the indicator function.

With the above preparation, we propose Protocol \ref{pt:com}, denoted by
$\Phi[\varepsilon,\cC,F]$. 
The protocol is visualized in Figure~\ref{fig:protocol}
The claim about this protocol is shown in Theorem~\ref{thm:main}.

\begin{algorithm}[h!]
\caption{String commitment protocol from UsNC} \label{pt:com}
\vspace{1ex}
\textbf{Commit phase}:
\begin{itemize}
    \item To commit to message $M \in \cM$, Alice first generates uniformly random variables $S \in_R \cS$, $\oM \in_R \cM$ and $\cC' \in \cX^n / \cC$. Then she computes $\uM = M + \oM$, applies the random map $\Gamma_S$ to $\uM$ to get $\bX$, and sends $\uX = \bX+\bx_{\cC'}$ through the noisy channel. 
    Later, Alice sends $S$, $\oM$, $\cC'$ via a noiseless channel.
    \item Bob keeps $S,\oM, \cC'$ and the channel output $\bZ$ as the commitment.
\end{itemize}
\textbf{Reveal phase}:
\begin{itemize}
    \item Alice announces $(M,\bX)$. Bob outputs $\hat{M}=M$ and  tests if $\bX \in \cC$, $\bZ \in \cT_{W,\varepsilon}^n(\bX + \bx_{\cC'})$, $f_S(\bX) = M + \oM$. If the test passes, Bob outputs $F = \textrm{``acc''}$, else $F = \textrm{``rej''}$.
\end{itemize}
\end{algorithm}

\begin{figure} 
    \centering
    \tikzset{every picture/.style={line width=0.75pt}} 

\begin{tikzpicture}[x=0.75pt,y=0.75pt,yscale=-1,xscale=1]

\draw  [dash pattern={on 0.75pt off 0.75pt}] (356.33,97.71) -- (426.33,97.71) -- (426.33,209.04) -- (356.33,209.04) -- cycle ;
\draw    (219,130.33) -- (293.67,130.33) ;
\draw [shift={(295.67,130.33)}, rotate = 180] [color={rgb, 255:red, 0; green, 0; blue, 0 }  ][line width=0.75]    (10.93,-3.29) .. controls (6.95,-1.4) and (3.31,-0.3) .. (0,0) .. controls (3.31,0.3) and (6.95,1.4) .. (10.93,3.29)   ;
\draw   (295.38,115.59) -- (325.41,115.59) -- (325.41,144.92) -- (295.38,144.92) -- cycle ;

\draw    (325.38,130.33) -- (396.67,130.33) ;
\draw [shift={(398.67,130.33)}, rotate = 180] [color={rgb, 255:red, 0; green, 0; blue, 0 }  ][line width=0.75]    (10.93,-3.29) .. controls (6.95,-1.4) and (3.31,-0.3) .. (0,0) .. controls (3.31,0.3) and (6.95,1.4) .. (10.93,3.29)   ;
\draw    (183.01,270.33) -- (355.67,270.33) ;
\draw [shift={(357.67,270.33)}, rotate = 180] [color={rgb, 255:red, 0; green, 0; blue, 0 }  ][line width=0.75]    (10.93,-3.29) .. controls (6.95,-1.4) and (3.31,-0.3) .. (0,0) .. controls (3.31,0.3) and (6.95,1.4) .. (10.93,3.29)   ;
\draw    (426.04,256) -- (464.04,256) ;
\draw [shift={(466.04,256)}, rotate = 180] [color={rgb, 255:red, 0; green, 0; blue, 0 }  ][line width=0.75]    (10.93,-3.29) .. controls (6.95,-1.4) and (3.31,-0.3) .. (0,0) .. controls (3.31,0.3) and (6.95,1.4) .. (10.93,3.29)   ;
\draw  [dash pattern={on 0.75pt off 0.75pt}] (134.33,97.71) -- (266,97.71) -- (266,212.38) -- (134.33,212.38) -- cycle ;
\draw    (109.67,130) -- (132,130) ;
\draw [shift={(134,130)}, rotate = 180] [color={rgb, 255:red, 0; green, 0; blue, 0 }  ][line width=0.75]    (10.93,-3.29) .. controls (6.95,-1.4) and (3.31,-0.3) .. (0,0) .. controls (3.31,0.3) and (6.95,1.4) .. (10.93,3.29)   ;
\draw   (184.67,111.33) .. controls (184.67,111.33) and (184.67,111.33) .. (184.67,111.33) -- (218.44,111.33) .. controls (218.44,111.33) and (218.44,111.33) .. (218.44,111.33) -- (218.44,145.71) .. controls (218.44,145.71) and (218.44,145.71) .. (218.44,145.71) -- (184.67,145.71) .. controls (184.67,145.71) and (184.67,145.71) .. (184.67,145.71) -- cycle ;
\draw   (185.33,164.26) .. controls (185.33,160.52) and (188.36,157.5) .. (192.09,157.5) -- (212.36,157.5) .. controls (216.09,157.5) and (219.11,160.52) .. (219.11,164.26) -- (219.11,185.12) .. controls (219.11,188.85) and (216.09,191.88) .. (212.36,191.88) -- (192.09,191.88) .. controls (188.36,191.88) and (185.33,188.85) .. (185.33,185.12) -- cycle ;

\draw    (201.29,158.11) -- (201.29,147.54) ;
\draw [shift={(201.29,145.54)}, rotate = 90] [color={rgb, 255:red, 0; green, 0; blue, 0 }  ][line width=0.75]    (10.93,-3.29) .. controls (6.95,-1.4) and (3.31,-0.3) .. (0,0) .. controls (3.31,0.3) and (6.95,1.4) .. (10.93,3.29)   ;
\draw  [dash pattern={on 0.75pt off 0.75pt}] (358.18,238.38) .. controls (358.18,238.38) and (358.18,238.38) .. (358.18,238.38) -- (426.18,238.38) .. controls (426.18,238.38) and (426.18,238.38) .. (426.18,238.38) -- (426.18,299.04) .. controls (426.18,299.04) and (426.18,299.04) .. (426.18,299.04) -- (358.18,299.04) .. controls (358.18,299.04) and (358.18,299.04) .. (358.18,299.04) -- cycle ;
\draw   (138.33,164.26) .. controls (138.33,160.52) and (141.36,157.5) .. (145.09,157.5) -- (165.36,157.5) .. controls (169.09,157.5) and (172.11,160.52) .. (172.11,164.26) -- (172.11,185.12) .. controls (172.11,188.85) and (169.09,191.88) .. (165.36,191.88) -- (145.09,191.88) .. controls (141.36,191.88) and (138.33,188.85) .. (138.33,185.12) -- cycle ;

\draw   (144.67,129.57) .. controls (144.67,124.17) and (149.04,119.8) .. (154.43,119.8) .. controls (159.83,119.8) and (164.2,124.17) .. (164.2,129.57) .. controls (164.2,134.96) and (159.83,139.33) .. (154.43,139.33) .. controls (149.04,139.33) and (144.67,134.96) .. (144.67,129.57) -- cycle ; \draw   (144.67,129.57) -- (164.2,129.57) ; \draw   (154.43,119.8) -- (154.43,139.33) ;
\draw    (164.71,129.67) -- (181.67,129.67) ;
\draw [shift={(183.67,129.67)}, rotate = 180] [color={rgb, 255:red, 0; green, 0; blue, 0 }  ][line width=0.75]    (10.93,-3.29) .. controls (6.95,-1.4) and (3.31,-0.3) .. (0,0) .. controls (3.31,0.3) and (6.95,1.4) .. (10.93,3.29)   ;
\draw    (154.29,156.37) -- (154.29,142.21) ;
\draw [shift={(154.29,140.21)}, rotate = 90] [color={rgb, 255:red, 0; green, 0; blue, 0 }  ][line width=0.75]    (10.93,-3.29) .. controls (6.95,-1.4) and (3.31,-0.3) .. (0,0) .. controls (3.31,0.3) and (6.95,1.4) .. (10.93,3.29)   ;
\draw    (265,175.71) -- (388.24,175.71) ;
\draw [shift={(390.24,175.71)}, rotate = 180] [color={rgb, 255:red, 0; green, 0; blue, 0 }  ][line width=0.75]    (10.93,-3.29) .. controls (6.95,-1.4) and (3.31,-0.3) .. (0,0) .. controls (3.31,0.3) and (6.95,1.4) .. (10.93,3.29)   ;
\draw    (134.33,130) -- (145.24,130) ;
\draw    (390.24,175.71) -- (390.24,235.71) ;
\draw [shift={(390.24,237.71)}, rotate = 270] [color={rgb, 255:red, 0; green, 0; blue, 0 }  ][line width=0.75]    (10.93,-3.29) .. controls (6.95,-1.4) and (3.31,-0.3) .. (0,0) .. controls (3.31,0.3) and (6.95,1.4) .. (10.93,3.29)   ;
\draw    (398.67,130.33) -- (398.67,235.04) ;
\draw [shift={(398.67,237.04)}, rotate = 270] [color={rgb, 255:red, 0; green, 0; blue, 0 }  ][line width=0.75]    (10.93,-3.29) .. controls (6.95,-1.4) and (3.31,-0.3) .. (0,0) .. controls (3.31,0.3) and (6.95,1.4) .. (10.93,3.29)   ;
\draw    (183.01,268.38) -- (183.01,211.38) ;
\draw [shift={(183.01,270.38)}, rotate = 270] [color={rgb, 255:red, 0; green, 0; blue, 0 }  ][line width=0.75]    (10.93,-3.29) .. controls (6.95,-1.4) and (3.31,-0.3) .. (0,0) .. controls (3.31,0.3) and (6.95,1.4) .. (10.93,3.29)   ;
\draw  [dash pattern={on 0.75pt off 0.75pt}] (135.01,241.71) -- (265,241.71) -- (265,299.04) -- (135.01,299.04) -- cycle ;
\draw    (426.71,288) -- (464.71,288) ;
\draw [shift={(466.71,288)}, rotate = 180] [color={rgb, 255:red, 0; green, 0; blue, 0 }  ][line width=0.75]    (10.93,-3.29) .. controls (6.95,-1.4) and (3.31,-0.3) .. (0,0) .. controls (3.31,0.3) and (6.95,1.4) .. (10.93,3.29)   ;
\draw   (229,163.92) .. controls (229,160.19) and (232.02,157.17) .. (235.76,157.17) -- (256.02,157.17) .. controls (259.75,157.17) and (262.78,160.19) .. (262.78,163.92) -- (262.78,184.79) .. controls (262.78,188.52) and (259.75,191.55) .. (256.02,191.55) -- (235.76,191.55) .. controls (232.02,191.55) and (229,188.52) .. (229,184.79) -- cycle ;
\draw    (244.96,157.44) -- (244.96,142.2) ;
\draw [shift={(244.96,140.2)}, rotate = 90] [color={rgb, 255:red, 0; green, 0; blue, 0 }  ][line width=0.75]    (10.93,-3.29) .. controls (6.95,-1.4) and (3.31,-0.3) .. (0,0) .. controls (3.31,0.3) and (6.95,1.4) .. (10.93,3.29)   ;
\draw   (235.33,130.57) .. controls (235.33,125.17) and (239.71,120.8) .. (245.1,120.8) .. controls (250.49,120.8) and (254.87,125.17) .. (254.87,130.57) .. controls (254.87,135.96) and (250.49,140.33) .. (245.1,140.33) .. controls (239.71,140.33) and (235.33,135.96) .. (235.33,130.57) -- cycle ; \draw   (235.33,130.57) -- (254.87,130.57) ; \draw   (245.1,120.8) -- (245.1,140.33) ;

\draw (277.8,117.87) node    {$\underline{\mathbf{X}}$};
\draw (307.56,259.2) node    {$M,\mathbf{X}$};
\draw (441.44,248.1) node   [align=left] {$\displaystyle \hat{M}$};
\draw (341.33,116.53) node    {$\mathbf{Z}$};
\draw (116.88,116.53) node    {$M$};
\draw (201.56,128.52) node    {$\Gamma _{S}$};
\draw (392.18,268.71) node    {$\beta _{R}$};
\draw (202.22,174.69) node    {$S$};
\draw (311.85,130.92) node    {$W^{n}$};
\draw (155.22,174.69) node    {$\overline{M}$};
\draw (311.56,163.2) node    {$S,\overline{M} ,\mathbf{x}_{\mathcal{C} '}$};
\draw (203.56,199.87) node    {$\alpha _{C}$};
\draw (204.89,284.53) node    {$\alpha _{R}$};
\draw (415.56,191.87) node    {$\beta _{C}$};
\draw (477.44,278.1) node   [align=left] {``acc'' or ``rej''};
\draw (246.89,175.36) node    {$\mathbf{x}_{\mathcal{C} '}$};

\end{tikzpicture}
    \caption{An illustration of Protocol~\ref{pt:com}. The dotted boxes $\alpha_C, \beta_C$ are protocol in the commit phase. The dotted boxes $\alpha_R, \beta_R$ are protocol in the reveal phase.}
    \label{fig:protocol}
\end{figure}

\begin{theorem}[Main result] \label{thm:main}
    Consider the $(n,\theta)$-UsNC model with parameter list 
    $\theta = (p,\varepsilon_A ,l_A,\varepsilon_B,l_B)$. Let $\cC$ be a 
    linear code with Hamming distance $d=2 \sigma n$. The protocol 
    $\Phi[\varepsilon,\cC,F]$ satisfies the following bounds on its completeness,
    hiding, and binding parameters:
    \begin{align}
          \delta_{c}(n,\theta,\Phi[\varepsilon,\cC,F]) &\le  8 \cdot 2^{- n \varepsilon^2},\label{EV-C} \\   
          \delta_{h}(n,\theta,\Phi[\varepsilon,\cC,F]) &\le 2 \cdot  2^{\frac{1}{2}(n + \log |\cM| - \log |\cC| - l_B)} + 8\varepsilon_B,  \label{eq:hiding} \\ 
          \delta_{b}(n,\theta,\Phi[\varepsilon,\cC,F]) &\le   (\sqrt{2}\varepsilon n +1)^2 2^{ n \left[ 
        (1-2\sigma) h\left( \frac{p-\sigma + \varepsilon}{1-2\sigma}\right) + 2\sigma \right] - l_A}  + \varepsilon_A. \label{eq:binding}
    \end{align} 
\end{theorem}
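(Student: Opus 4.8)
The plan is to prove the three bounds independently.

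\textbf{Completeness.} This is essentially immediate. When both parties are honest the channel is $W^n$, and by construction $\bX$ is uniform on $f_S^{-1}(\uM)\subseteq\cC$, so both tests $\bX\in\cC$ and $f_S(\bX)=M+\oM$ pass automatically; the only test that can fail in the reveal phase is $\bZ\in\cT_{W,\varepsilon}^n(\bX+\bx_{\cC'})$. Since $\bZ\sim W^n(\bX+\bx_{\cC'})$, Lemma~\ref{lem:unit} bounds its failure probability by $8\cdot 2^{-n\varepsilon^2}$, which gives \eqref{EV-C}.

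\textbf{Hiding.} Let $B$ be the quantum output of a dishonest Bob's channel $W^B$ applied to $\uX=\bX+\bx_{\cC'}$; since honest Alice's commit phase only sends, Bob's total commit-phase information is $(S,\oM,\cC',B)$, with $W^B$ acting before $S,\oM,\cC'$ are announced. First I would note that, as $\bX$ is uniform on $\cC$ and the independent $\cC'$ is uniform over the $2^n/|\cC|$ cosets, the transmitted $\uX$ is uniform on $\{0,1\}^n$, so $(\uX,B)$ is distributed as $W^B\times U_{\cX^n}$ and \eqref{eq:ch3} gives $H_{\min}^{\varepsilon_B}(\uX|B)\ge l_B$. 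Because $S$ is independent of $(\uX,B)$ and $(\bX,\cC')$ is a relabelling of $\uX$, the min-entropy chain rule yields $H_{\min}^{\varepsilon_B}(\bX\mid S\cC'B)\ge l_B-n+\log|\cC|$. Since $F$ is $2$-universal from $\cC$ to $\cM$, the smooth quantum leftover hash lemma gives, up to absolute constants,
\begin{align*}
  \GTD\!\big(\rho_{f_S(\bX)S\cC'B},\,U_\cM\otimes\rho_{S\cC'B}\big)\ \le\ 2^{\frac12\left(n+\log|\cM|-\log|\cC|-l_B\right)}+\mathcal{O}(\varepsilon_B).
\end{align*}
As $\oM=f_S(\bX)-M$ is just the $M$-translate of $f_S(\bX)$, the state $\rho_{B|m}$ lies within this distance of the $m$-independent state $U_\cM\otimes\rho_{S\cC'B}$, and a triangle inequality over $m\ne m'$ gives \eqref{eq:hiding}.

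\textbf{Binding.} Fix the randomness of a dishonest Alice: it determines her channel input $\hat x$, the announced $(S,\oM,\cC')$, and both openings $(M_0,\bX_0),(M_1,\bX_1)$, all of which are functions of her commit-phase view and hence independent of the realised output $\bZ\sim W^A(\hat x)$, since honest Bob sends nothing before the openings. On the winning event $M_0\ne M_1$, $\bX_0,\bX_1\in\cC$ and $f_S(\bX_i)=M_i+\oM$; these force $\bX_0\ne\bX_1$, hence $\HD(\bX_0+\bx_{\cC'},\bX_1+\bx_{\cC'})=\HD(\bX_0,\bX_1)\ge d=2\sigma n$, and $\bZ$ lies in $\cT_{W,\varepsilon}^n(\bX_0+\bx_{\cC'})\cap\cT_{W,\varepsilon}^n(\bX_1+\bx_{\cC'})$. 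The combinatorial core is the bound
\begin{align*}
  \big|\cT_{W,\varepsilon}^n(\by_0)\cap\cT_{W,\varepsilon}^n(\by_1)\big|\ \le\ (\sqrt2\,\varepsilon n+1)^2\,2^{\,n\left[(1-2\sigma)h\!\left(\frac{p-\sigma+\varepsilon}{1-2\sigma}\right)+2\sigma\right]}
\end{align*}
for all $\by_0,\by_1$ with $\HD(\by_0,\by_1)\ge d$, proved by decomposing $\bz$ according to its restriction to the $\HD(\by_0,\by_1)$ coordinates where $\by_0,\by_1$ differ (a binomial on $\le 2\sigma n$ bits, contributing $\le 2^{2\sigma n}$) and to the remaining coordinates (contributing $\le 2^{(1-2\sigma)n\,h((p-\sigma+\varepsilon)/(1-2\sigma))}$ once the double-typicality constraint $\HD(\bz,\by_i)\in[n(p-\varepsilon),n(p+\varepsilon)]$ is imposed), with a lattice-point count of the admissible pair of Hamming weights producing the $(\sqrt2\,\varepsilon n+1)^2$ prefactor; one also checks the exponent is non-increasing in $\HD(\by_0,\by_1)$ over the range where the intersection is non-empty, so the minimum distance is the worst case. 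Finally, $H_{\min}^{\varepsilon_A}(\bZ)_{W^A(\hat x)}\ge l_A$ furnishes a subnormalised $\bZ'$ with $\GTD(\bZ,\bZ')\le\varepsilon_A$ and $\max_{\bz}\Pr[\bZ'=\bz]\le 2^{-l_A}$, so $\Pr_{\bZ\sim W^A(\hat x)}[\bZ\in T]\le |T|\,2^{-l_A}+\varepsilon_A$ for every $T$; applying this with $T$ the above intersection and averaging over Alice's randomness gives \eqref{eq:binding}.

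\textbf{Main obstacle.} I expect binding to be the crux: establishing the typical-set intersection bound with the correct polynomial prefactor, verifying that the minimum-distance configuration is extremal, and arguing cleanly that a dishonest Alice's openings are independent of $\bZ$ are the real work. The hiding bound is conceptually standard (leftover hash lemma) but demands care in tracking the smoothing parameters through the chain rule and the relabelling $\uX\leftrightarrow(\bX,\cC')$, while completeness is a one-line consequence of Lemma~\ref{lem:unit}.
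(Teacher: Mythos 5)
Your proposal is correct and follows the same overall architecture as the paper: completeness via Lemma~\ref{lem:unit}, hiding via the smooth leftover hash lemma applied to the hash of the codeword given Bob's side information, and binding via the intersection bound on conditional typical sets combined with the min-entropy constraint on $\bZ$ (including the observation, which the paper uses implicitly, that the exponent is decreasing in the pairwise distance so the minimum distance $d=2\sigma n$ is the worst case). The one genuine difference is in the hiding proof: to convert the global constraint $H_{\min}^{\varepsilon_B}(\uX|B)\ge l_B$ on the uniform input over $\cX^n$ into a per-coset min-entropy bound, you invoke the smooth min-entropy chain rule $H_{\min}^{\varepsilon_B}(\bX|\cC'B)\ge H_{\min}^{\varepsilon_B}(\bX\cC'|B)-\log(2^n/|\cC|)$ after the relabelling $\uX\leftrightarrow(\bX,\cC')$, whereas the paper carries this out by hand in its Step~3, explicitly constructing restricted subnormalized states $\tau_{\bX B|\cC'}$ from the optimal smoothing state and bounding both their min-entropy loss ($d'=n-\log|\cC|$) and their average distance to $W^B\times U_{\cC'}$ (at most $2\varepsilon_B$, which is where the final constant $8\varepsilon_B$ comes from). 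The chain-rule route is shorter and more standard; the paper's explicit construction buys a precise accounting of the smoothing parameter across cosets, which you would need to supply to replace your $\mathcal{O}(\varepsilon_B)$ with the stated $8\varepsilon_B$. Everything else, including the decomposition underlying the typical-set intersection bound (restriction to agreement/disagreement coordinates, lattice-point count giving the $(\sqrt{2}\varepsilon n+1)^2$ prefactor), matches the paper's Lemma~\ref{lem:typ} and its appendix proof.
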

The completeness part of Theorem~\ref{thm:main} follows directly from the property of typical sets.
\begin{align}
\delta_c(n,\theta,\Phi[\varepsilon,\cC,F]) &=
\max_{m\in \ca{M}} 1- \Pr[\hat{M}= m \land F = \textrm{``acc''}] \notag\\
&=  \max_{m\in \ca{M}} 1 - \Pr[\bX \in \cC \land \bZ \in \cT_{W,\varepsilon}^n(\bX) \land f_S(\bX) = m + \oM] \notag \\
&\le \max_{m\in \ca{M}, s \in \cS} 
\max_{\bx \in f_s^{-1}(m)} 
1- \Pr_{\bZ \sim W^n(\bx)}
[\bZ \in \cT_{W,\varepsilon}^n(\bx)]  \notag \\
& \le  8 \cdot 2^{-n \varepsilon^2}, \label{eq:SCTS}
\end{align}
where inequality \eqref{eq:SCTS} follows from Lemma \ref{lem:unit}. In Section~\ref{sec:proofBC}, we show the proofs of hiding and binding parts. 

\subsection{Proof of Theorem \ref{thm:main}} \label{sec:proofBC}
\subsubsection{Proof of hiding}
To bound the hiding parameter, we interpret the protocol as a privacy amplification scheme. The goal is to show that Bob's final state is nearly independent of the committed message $m$.
The combination of map $\Gamma_S$ and adding $\bx_{\cC'}$ in the commit phase
is effectively applying a new $2$-universal hash function $f_{S,\cC'}$ (defined in equation \eqref{eq:DFN}) on the set $\cC'$ as a randomness extractor. Then, the smoothed version of leftover hash lemma ensures the message $m$ is uniform and independent on Bob's side information.
The proof is broken down into three steps.

\textbf{Step 1:} Preparation.

Define the function $f_{s,\cC'}:\cC' \to \cM$ and its ``inverse'' random map $\Gamma_{s,\cC'}: \cM \to \cC'$ as follows:
\begin{align}
\forall c' \in \cC', f_{s,\cC'}(c') &= f_{S}(c' - \bx_{\cC'}),  \label{eq:DFN}\\
\forall c' \in \cC', m \in \cM, \Gamma_{s,\cC'}(c'|m) &= \frac{|\cM|}{|\cC|} \mathbbm{1}[f_{s,\cC'}(c')=m].
\end{align}
Note that $\{f_{s,\cC'}\}_{s\in \cS}$ is also a 2-universal hash function family.

Define the concatenated map $\phi_{s,\cC'}: \cM \to \cD(\cH_B)$ as $\phi_{s,\cC'} \coloneqq W^B \circ \Gamma_{s,\cC'}$. $\bar{\phi}_{s,\cC'} = \sum_{m \in \cM}\phi_{s,\cC'}(m)/|\cM|$ is the marginal output state when the input is uniform distributed. 

\textbf{Step 2:} Apply the leftover hash lemma.

When Bob is dishonest, he receives quantum register $B$ (through channel $W^B$) and classical variables $S,\bar{M},\cC'$. With the notations in Step 1, the state of Bob conditioned on input $m$ can be expressed as
\begin{align} \label{eq:hdrho}
    \rho_{B\cC' S \bar{M}|m} = \frac{1}{|\cM| } \cdot \frac{1}{|\cS|}  \cdot \frac{|\cC|}{2^n} \sum_{s \in \cS, \bar{m} \in \cM, \cC' \in \cX^n/\cC} \proj{s} \otimes \proj{\bar{m}} \otimes \proj{\cC'} \otimes \phi_{s,\cC'}(m+\bar{m}).
\end{align}
Following the definition of hiding, we have for any $m,m'\in \cM$:
\begin{align}
    &\GTD(\rho_{B\cC'S\bar{M}|m},\rho_{B\cC'S\bar{M}|m'}) \notag \\
    ={}& \frac{1}{2} \left\|\rho_{B\cC'S\bar{M}|m} - \rho_{B\cC'S\bar{M}|m'} \right\|_1  \notag \\
    ={}& \bbE_{\cC'} \bbE_S  \sum_{m \in \cM} \frac{1}{|\cM|}
    \|\phi_{S,\cC'}(m+\om) - \phi_{S,\cC'}(m'+\om) \|_1   \notag\\
    \le{}&  \bbE_{\cC'} \bbE_S \frac{1}{2|\cM|}\sum_{\om \in \cM}  
    \left[ \|\phi_{S,\cC'} (m+\om) - \bar{\phi}_{S,\cC'}  \|_1 + \| \phi_{S,\cC'}(m'+\om) - \bar{\phi}_{S,\cC'} \|_1 \right] \label{eq:TRI}\\
    ={}&  \bbE_{\cC'} \bbE_S  \frac{1}{|\cM|}\sum_{\um \in \cM} 
    \|\phi_{S,\cC'}(\um) - \bar{\phi}_{S,\cC'}  \|_1  \label{eq:GA} \\
    {}=& \bbE_{\cC'} \bbE_S \|\phi_{S,\cC'} \times U_{\cM} - \bar{\phi}_{S,\cC'} \otimes U_{\cM}  \|_1 \notag\\
    ={}& \bbE_{\cC'} \bbE_S \|f_{S,\cC'} (W^B \times U_{\cC'}) - \bar{\phi}_{S,\cC'} \otimes U_{\cM}  \|_1, \label{eq:stp1} 
\end{align}
where \eqref{eq:TRI} follows from triangle inequality and \eqref{eq:GA} follows from group rearrangement.

Applying the smoothed version~\cite[Corollary 5.6.1]{renner2008security} of the leftover hash lemma \cite{bennett1995generalized,hastad1999pseudorandom} to equation \eqref{eq:stp1}, we have, for any function $\varepsilon_B'(\cC') \ge 0$,
\begin{align}
    \bbE_{\cC'} \bbE_S \|f_{S,\cC'} (W^B \times U_{\cC'}) - \bar{\phi}_{S,\cC} \otimes U_{\cM}  \|_1 \le \bbE_{\cC'} 2 \cdot  2^{\frac{1}{2}( \log |\cM| -H_{\min}^{\varepsilon_B'(\cC')}(\bX|B)_{W^B \times U_{\cC'}})} + 4\varepsilon_B'(\cC') \label{eq:stp2}.
\end{align}

\textbf{Step 3:} Upper-bound equation \eqref{eq:stp2} with channel parameter $l_B$ and $\varepsilon_B$.

Note that, for a set of subnormalized states $\{\tau_{\bX B|\cC'}\}_{\cC' \in \cX^n/\cC}$,  inequality \eqref{eq:stp2} can be re-written as 
\begin{align}
    \bbE_{\cC'} \bbE_S \|f_{S,\cC'} (W^B \times U_{\cC'}) - \bar{\phi}_{S,\cC} \otimes U_{\cM}  \|_1 \le \bbE_{\cC'} 2 \cdot  2^{\frac{1}{2}( \log |\cM| -H_{\min}(\bX|B)_{\tau_{\bX B|\cC'}})} + 4\GTD(\tau_{\bX B|\cC'} , W^B \times U_{\cC'}) \label{eq:JW1}
\end{align}
The key idea is to find a set $\{\tau_{\bX B|\cC'}\}_{\cC' \in \cX^n/\cC}$ with proper conditional min-entropy $H_{\min}(\bX|B)_{\tau_{\bX B|{\cC}' }}$ and short average distance $\GTD(\tau_{\bX B|\cC'} , W^B \times U_{\cC'})$.

We choose $d'$ such that $2^{d'}=|\cX^n|/|\cC|=2^n/|\cC|$. 
Note that 
\begin{align}
H^{\varepsilon_B}_{\min}(\bX|B)_{W^B \times U_{\cX^n}}\notag \notag 
=
\max_{\tau_{\bX B} \in \cD_{\bullet}(\cH_{\bX B})}
\{ 
H_{\min}(\bX|B)_{\tau_{\bX B}}|
\GTD(\tau_{\bX B, W^B \times U_{\cX^n}})
\le \varepsilon_B
\} ,\label{NH2}
\end{align}
Assume $\tau^*_{\bX B}$ is the subnormalized state achieving maximum in equation \eqref{NH2}.
Denote the marginal distribution of $\tau^*_{\bX B}$ by $P_{X^x}$. 
That is, $\tau^*_{\bX B}$ is written as 
\begin{align}
    \tau^*_{\bX B} =\sum_{\bx \in \cX^{n}} P_{X^x}(\bx)
\tau^*_{ B|\bx} \otimes |\bx\rangle \langle \bx|.
\end{align}
We construct the states $\tau_{\bX B|{\cC}'}$ and $\tilde{\tau}_{\bX B|{\cC}'}$ as
\begin{align}
\tau_{\bX B|{\cC}'}
&:=\sum_{\bx \in {\cC}'} \min(2^{d'},\frac{1}{P_{X^x}({\cC}')})
P_{X^x}(\bx)
\tau^*_{B|\bx} \otimes |\bx\rangle \langle \bx| \\
\tilde{\tau}_{\bX B|{\cC}'}
&:=\sum_{\bx \in {\cC}'} 2^{d'} P_{X^x}(\bx)
\tau^*_{B|\bx} \otimes |\bx\rangle \langle \bx|
\end{align}
Since $\tau_{\bX B|{\cC}'}\le \tilde{\tau}_{\bX B|{\cC}'} \le 2^{d'} \tau^*_{\bX B}$, we have
\begin{align}
H_{\min}(\bX|B)_{\tau_{\bX B|{\cC}' }} &\ge 
H_{\min}(\bX|B)_{\tilde{\tau}_{\bX B|{\cC}' }} \notag\\
&\ge  H_{\min}(\bX|B)_{\tau^*_{\bX B}} - d' \notag \\
&= H^{\varepsilon_B}_{\min}(\bX|B)_{W^B \times U_{\cX^n}}-d'.\label{eq:JJ1}
\end{align}

We also have
\begin{align}
& \bbE_{\cC'} \GTD(\tau_{\bX B|\cC'}, W^B\times U_{\cC'}) \notag\\
={}&\mathbb{E}_{{\cC}'}    
\frac{1}{2}\|
\tau_{\bX B|{\cC}'}-W^B\times U_{{\cC}'} \|_1
+\frac{1}{2}|
\tr \tau_{\bX B|{\cC}'}-\tr W^B\times U_{{\cC}'} | \notag \\
={}&\mathbb{E}_{{\cC}'}    
\frac{1}{2}\|
\tau_{\bX B|{\cC}'}-W^B\times U_{{\cC}'} \|_1
+ \frac{1}{2}|\min(2^{d'}P_{X^x}({\cC}'),1) -1| \notag
\\
={}&\mathbb{E}_{{\cC}'}\frac{1}{2}\sum_{\bx \in {\cC}'}
\|\min(2^{d'},\frac{1}{P_{X^x}({\cC}')})
P_{X^x}(\bx)\tau_{\bX B|\bx}-2^{-d} W^B(\bx) \|_1
+ \frac{1}{2}|\min(2^{d'}P_{X^x}({\cC}'),1) -1| \notag
\\
\le {}&\mathbb{E}_{{\cC}'}\frac{1}{2}\sum_{\bx \in {\cC}'}\Big(
\|2^{d'} P_{X^x}(\bx)\tau_{\bX B|\bx}-2^{-d} W^B(\bx) \|_1
+
|2^{d'}-\min(2^{d'},\frac{1}{P_{X^x}({\cC}')})| P_{X^x}(\bx)\Big)
+ \frac{1}{2}|\min(2^{d'}P_{X^x}({\cC}'),1) -1| \notag
\\
\le {}&\mathbb{E}_{{\cC}'}\frac{1}{2}\sum_{\bx \in {\cC}'}
\Big(
\|2^{d'} P_{X^x}(\bx)\tau_{\bX B|\bx}-2^{-d} W^B(\bx) \|_1\Big)
+
|2^{d'}P_{X^x}({\cC}')-\min(2^{d'}P_{X^x}({\cC}'),1)| 
+ \frac{1}{2}|\min(2^{d'}P_{X^x}({\cC}'),1) -1| \notag
\\
= {}&\mathbb{E}_{{\cC}'}\frac{1}{2}\sum_{\bx \in {\cC}'}
\Big(
\|2^{d'} P_{X^x}(\bx)\tau_{\bX B|\bx}-2^{-d} W^B(\bx) \|_1\Big)
+
\frac{1}{2} |2^{d'}P_{X^x}({\cC}')-1|  \notag
\\
\stackrel{(a)}{\le}  {}&\mathbb{E}_{{\cC}'}\sum_{\bx \in {\cC}'}
\Big(
\|2^{d'} P_{X^x}(\bx)\tau_{\bX B|\bx}-2^{-d} W^B(\bx) \|_1\Big) \notag
\\
={}&\|
\tau_{\bX B}-W^B\times U_{\cX^n} \|_1
\le \|
\tau_{\bX B}-W^B\times U_{\cX^n} \|_1
+|\tr \tau_{\bX B}-\tr W^B\times U_{\cX^n} |
\le 2 \varepsilon_B, \label{eq:JJ2}
\end{align}
where step (a) follows from
\begin{align}
\sum_{\bx \in {\cC}'}
\Big(
\|2^{d'} P_{X^x}(\bx)\tau_{\bX B|\bx}-2^{-d} W^B(\bx) \|_1\Big)
\ge 
\|\sum_{\bx \in {\cC}'}
\Big(
2^{d'} P_{X^x}(\bx)\tau_{\bX B|\bx}-2^{-d} W^B(\bx) \Big)\|_1
=|2^{d'}P_{X^x}({\cC}')-1| .
\end{align}

Substitute inequalities \eqref{eq:JJ1} and \eqref{eq:JJ2} into inequality \eqref{eq:JW1}, we have 
\begin{align*}
    \GTD(\rho_{B\cC'S\bar{M}|m},\rho_{B\cC'S\bar{M}|m'}) \le 2^{\frac{1}{2}\left( n + \log |\cM| - \log |\cC| - l_B\right)} + 8 \varepsilon_B,
\end{align*}
which implies
\begin{align*}
    \delta_{h}(n,\theta,\Phi[\varepsilon,\cC,F]) \le 2^{\frac{1}{2}\left( n + \log |\cM| - \log |\cC| - l_B\right)} + 8 \varepsilon_B.
\end{align*}

\subsubsection{Proof of binding}
As shown in Figure \ref{fig:bc}(c) The core of the binding proof is to show that it is impossible for a cheating Alice to open her commitment with two different messages.
A successful attack requires her to find two distinct codewords, $\bx_0$ and $\bx_1$, such that  they are accepted with a single channel output $\bZ$.
The bound on binding parameter relies on two facts. First, for any two distinct codewords, the intersection of their corresponding typical sets, $\cT_{W,\varepsilon}^n(\bx_0) \cap \cT_{W,\varepsilon}^n(\bx_1)$, is small. 
Second, Alice's ability to concentrate the probability distribution of her channel output $\bZ$ is limited by the entropic constraint. Combining the small target space and the diffused output distribution makes the attack almost impossible.

Specifically, a general commit strategy $\cA_C$ generates random variables $(S, \uX, \oM, \cC', R)$ subjected to some joint distribution.~\footnote{Alice may generate an additional quantum system correlated to these classical variables. However, because Alice and Bob do not have any quantum interaction, we can always trace out this quantum system to get the current classical strategy.} $R$ is the extra information flowing from $\cA_C$ to $\cA_{R0}$ and $\cA_{R1}$.
$\uX$ is the input of the channel $W^A$.
$S$, $\oM$ and $\cC'$ are sent to Bob directly.
Similarly, a general reveal strategy is a probabilistic map $(S,\uX, \oM, \cC', R) \mapsto (\bX, M)$. Specifically, we have 
\begin{align*}
    \cA_{R0}: & (S,\uX,\oM, \cC', R) \mapsto (\bX_0, M_0) ,\\
    \cA_{R1}: & (S,\uX,\oM, \cC', R) \mapsto (\bX_1, M_1).
\end{align*}
The overall probability distribution of $(S, \uX, \oM, \cC', \bX_0, \bX_1, M_0, M_1)$ after applying $\cA_C, \cA_{R0}$ and $\cA_{R1}$ is denoted by $P_{\square}$.
Denote by $\beta$ the test function that determines the acceptance in the reveal phase, then 
\begin{align}
    & \delta_{b}(n,\theta,\Phi[\varepsilon,\cC,F]) \notag \\
    ={} & \max_{P_{\square} }
    \Pr_{\substack{
    (S,\bZ,\oM,\cC', \bX_0,\bX_1,M_0,M_1) \sim W^A \circ P_{\square}
    }}  
    \left[
    \begin{array}{l}
    M_0 \ne M_1 \; \land  \\
    F_1 = F_2= \textrm{``acc''}
    \end{array}
    \right]  \notag \\ 
    \le {}&
    \max_{P_{S,\uX,\oM,\cC',\bX_0,\bX_1} } \max_{m_0\ne m_1}
    \Pr_{ \substack{
    (S,\bZ,\oM,\cC',\bX_0,\bX_1) \sim W^A \circ P_{S,\uX,\oM,\cC',\bX_0,\bX_1}
    }}  
    \left[
    \begin{array}{l}
    \beta(S,\bZ,\oM,\cC',\bX_0,m_0) = \textrm{``acc'' } \; \land \\
    \beta(S,\bZ,\oM,\cC',\bX_1,m_1) = \textrm{``acc''}
    \end{array}
    \right]  \notag \\
    ={} &   \max_{ \substack{ P_{S,\uX,\oM,\cC',\bX_0,\bX_1} \\
    m_0 \ne m_1 } } \Pr_{\dots}
    \left[
    \begin{array}{l}
    \bX_0 \in \cC \land f_S(\bX_0) = m_0 + \oM \land \bZ \in \cT_{W,\varepsilon}^n(\bX_0 + \bx_{\cC'}) \;\land \\
    \bX_1 \in \cC \land f_S(\bX_1) = m_1 + \oM \land \bZ \in \cT_{W,\varepsilon}^n(\bX_1 + \bx_{\cC'})
    \end{array}
    \right] \notag  \\
     \le {} & \max_{  P_{S,\uX,\oM,\cC',\bX_0,\bX_1}} 
    \Pr_{\dots} \left[ \bX_0, \bX_1 \in \cC \land \bX_0 \ne \bX_1 \land 
    \bZ - \bx_{\cC'} \in \cT_{W,\varepsilon}^n(\bX_0) \cap \cT_{W,\varepsilon}^n(\bX_1) 
    \right]  \label{eq:INS} \\
    = {} & \max_{P_{S,\uX,\oM,\cC',\bX_0,\bX_1}} 
    \sum_{ \substack{\bx_0 , \bx_1 \in \cC \\
    \bx_0 \ne \bx_1 } } 
    P_{\bX_0,\bX_1} (\bx_0, \bx_1)
    \Pr_{ \substack{\bZ \sim W^A \circ P_{\uX|(\bX_0,\bX_1)= (\bx_0,\bx_1)} \\
     }} 
     \left[ 
    \bZ \in \cT_{W,\varepsilon}^n(\bx_0) \cap \cT_{W,\varepsilon}^n(\bx_1)
    \right] \notag \\
    \le {}& \max_{ P_{\uX,\cC',\bX_0,\bX_1}} \max_{ \substack{\bx_0 , \bx_1 \in \cC \\
    \bx_0 \ne \bx_1 }} \quad
    \Pr_{\bZ \sim W^A \circ P_{\uX|\bx_0,\bx_1}} \left[ 
    \bZ -\bx_{\cC'}\in \cT_{W,\varepsilon}^n(\bx_0) \cap \cT_{W,\varepsilon}^n(\bx_1)
    \right] \notag \\
    \le {}&  \max_{ \substack{\bx_0 , \bx_1 \in \cC \\
    \bx_0 \ne \bx_1 \\
    \cC' \in \cX^n/\cC}} \max_{P_{\uX}} \Pr_{\bZ \sim W^A \circ P_{\uX}} [\bZ -\bx_{\cC'} \in \cT_{W,\varepsilon}^n(\bx_0) \cap \cT_{W,\varepsilon}^n(\bx_1)], \label{eq:EHZ} 
\end{align}
where inequality \eqref{eq:INS} follows from equation \eqref{eq:SYM}.

Because $\forall \bx \in \cX^n,  H_{\min}^{\varepsilon_A}(\bZ)_{W^A(\bx)} \ge l_A$, there exists a $Q_{\bZ|\bx} \in \Delta_{\bullet}(\cZ^n)$ for each $\bx$ such that $H_{\min}(\bZ)_{Q_{\bZ|\bx}} \ge l_A$ and $\GTD(Q_{\bZ|\bx}, W^A(\bx)) \le \varepsilon_A$. 
Define 
$$Q_{\bZ} = \sum_{\ux \in \cX^n} P_{\uX}(\ux) Q_{\bZ|\bx},$$
then we have 
\begin{align*}
    \GTD(Q_{\bZ}, W^A \circ P_{\uX}) &\le \varepsilon_A, \\
    H_{\min}(Q_{\bZ}) &\ge l_A.
\end{align*}
The probability in equation \eqref{eq:EHZ} is bounded by
\begin{align}
    \Pr_{\bZ \sim W^A \circ P_{\uX}} \left[\bZ -\bx_{\cC'} \in \cT_{W,\varepsilon}^n(\bx_0) \cap \cT_{W,\varepsilon}^n(\bx_1) \right] & \le \Pr_{\bZ \sim Q_{\bZ -\bx_{\cC'}} } \left[ \bZ \in \cT_{W,\varepsilon}^n(\bx_0) \cap \cT_{W,\varepsilon}^n (\bx_1) \right]  + \varepsilon_{A} \notag \\
    & \le |\cT_{W,\varepsilon}^n(\bx_0) \cap \cT_{W,\varepsilon}^n (\bx_1)|  \max_{\bz} Q_{\bZ -\bx_{\cC'}}(\bz) + \varepsilon_A \notag \\
    & = |\cT_{W,\varepsilon}^n(\bx_0) \cap \cT_{W,\varepsilon}^n (\bx_1)|  \max_{\bz} Q_{\bZ}(\bz) + \varepsilon_A \notag \\
    & \le 2^{-l_A} |\cT_{W,\varepsilon}^n(\bx_0) \cap \cT_{W,\varepsilon}^n (\bx_1)| + \varepsilon_A. \label{eq:ABX}
\end{align}
To estimate the quantity $|\cT_{W,\varepsilon}^n(\bx_0) \cap \cT_{W,\varepsilon}^n (\bx_1)|$, we have the following lemma (see Appendix \ref{app:pf} for the proof).
\begin{lemma} \label{lem:typ}
    Let $W$ be a BSC with transition probability $p$ ($p<\frac{1}{2}$), and $\bx,\by \in \{0,1\}^n$ be two binary sequences with Hamming distance $d_H(\bx,\by) = 2\sigma n$. For any $ 0 < \varepsilon < \frac{1}{2}-p$,
    \begin{align}
        |\cT_{W,\varepsilon}^n (\bx) \cap \cT_{W,\varepsilon}^n(\by)| \le
        \left\{
        \begin{array}{ll}
         (\sqrt{2}\varepsilon n + 1)^2 2^{ n \left[ 
        (1-2\sigma) h\left( \frac{p- \sigma + \varepsilon}{1-2\sigma}\right) + 2\sigma \right]},  &\textnormal{if $ \sigma \le p + 2\varepsilon$ }   \\
        0, & \textnormal{if $\sigma > p + 2\varepsilon$}
        \end{array}
        \right. \label{eq:LEM}
    \end{align}
\end{lemma}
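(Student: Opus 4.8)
The plan is to reduce the problem to a counting argument on Hamming shells and then bound the relevant binomial coefficients. First I would set up coordinates adapted to the pair $(\bx,\by)$: let $S = \{i : x_i \ne y_i\}$ be the set of positions where $\bx$ and $\by$ disagree, so $|S| = 2\sigma n$, and let $\bar S$ be its complement with $|\bar S| = (1-2\sigma)n$. For a candidate $\bz$, write $a = |\{i \in S : z_i \ne x_i\}|$ for the number of disagreements with $\bx$ inside $S$, so that $\bz$ disagrees with $\by$ on exactly $2\sigma n - a$ positions of $S$; and write $b = \HD(\bz,\bx) - a = \HD(\bz,\by) - (2\sigma n - a)$ for the number of disagreements outside $S$, which is the same for $\bx$ and $\by$ since they agree on $\bar S$. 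Membership of $\bz$ in both typical sets is equivalent to the pair of constraints $n(p-\varepsilon) \le a + b \le n(p+\varepsilon)$ and $n(p-\varepsilon) \le (2\sigma n - a) + b \le n(p+\varepsilon)$. Adding these gives $2b + 2\sigma n$ lying in $[2n(p-\varepsilon), 2n(p+\varepsilon)]$, i.e. $b \le n(p-\sigma+\varepsilon)$; subtracting gives $|2a - 2\sigma n|\le 2\varepsilon n$, i.e. $|a - \sigma n| \le \varepsilon n$. In particular, if $p - \sigma + \varepsilon < 0$, i.e. $\sigma > p + \varepsilon$, there is no valid $b \ge 0$ unless additionally $\sigma \le p+2\varepsilon$ forces $b=0$ to still be feasible; a careful check of the feasibility of $b=0$ together with $|a-\sigma n|\le \varepsilon n$ and $a+0 \ge n(p-\varepsilon)$ shows the intersection is empty precisely when $\sigma > p + 2\varepsilon$, giving the second case of \eqref{eq:LEM}.

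For the first case $\sigma \le p + 2\varepsilon$, I would bound the count by summing over the admissible values of $a$ and $b$: the number of $\bz$ with given $(a,b)$ is $\binom{2\sigma n}{a}\binom{(1-2\sigma)n}{b}$. Since $a$ ranges over an interval of length at most $2\varepsilon n$ and $b$ over an interval of length at most $\sqrt{2}\varepsilon n$ (the precise width coming from the two typicality windows — I would track the constant to land on $\sqrt 2$), the total is at most $(2\varepsilon n + 1)(\sqrt2 \varepsilon n + 1)$ — or more cleanly $(\sqrt{2}\varepsilon n + 1)^2$ after a slightly coarser bound on the $a$-range — times the maximal summand. The crude bound $\binom{2\sigma n}{a} \le 2^{2\sigma n}$ handles the $S$ part. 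For the $\bar S$ part, the maximizing $b$ is the largest admissible value $b \approx n(p - \sigma + \varepsilon)$ (valid since $p - \sigma + \varepsilon \le \frac12(1-2\sigma)$ under the hypothesis $\varepsilon < \frac12 - p$, so we are on the increasing branch of the binomial), and $\binom{(1-2\sigma)n}{b} \le 2^{(1-2\sigma)n \, h(b/((1-2\sigma)n))} \le 2^{(1-2\sigma)n \, h\left(\frac{p-\sigma+\varepsilon}{1-2\sigma}\right)}$ by the standard entropy bound on binomials. Multiplying the two factors gives the exponent $(1-2\sigma)h\left(\frac{p-\sigma+\varepsilon}{1-2\sigma}\right) + 2\sigma$ in \eqref{eq:LEM}.

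The main obstacle I anticipate is \emph{not} the entropy estimate but the bookkeeping of the feasible region for $(a,b)$: getting the polynomial prefactor exactly $(\sqrt2 \varepsilon n + 1)^2$ rather than something like $(2\varepsilon n+1)^2$ requires noticing that the two typicality windows, when intersected, constrain $a+b$ and $b$ (equivalently $a$ and $b$) more tightly than each window alone — the effective side lengths of the feasible parallelogram in the $(a,b)$-plane shrink by a factor related to $\sqrt 2$ because the binding linear combinations are $(a+b)$ and $(2a)$ rather than $a$ and $b$ separately. I would be careful to verify that the value $b = \lfloor n(p-\sigma+\varepsilon)\rfloor$ is simultaneously compatible with some admissible $a$ (so that the maximal summand is actually attained, or at least dominated), and to confirm the monotonicity claim $\frac{p-\sigma+\varepsilon}{1-2\sigma} \le \frac12$ that lets us replace $b$ by its endpoint inside $h(\cdot)$; this uses $\sigma \ge 0$ and $p + \varepsilon < \frac12$. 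The rest is routine: assemble the prefactor and the exponential, and note the empty-intersection case follows from the same feasibility analysis.
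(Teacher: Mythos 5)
Your proposal is correct and follows essentially the same route as the paper's proof: split the coordinates into the agreement/disagreement sectors of $(\bx,\by)$, observe that joint typicality pins the pair of sector-weights to a rotated square of side $\sqrt{2}\varepsilon n$ in the weight plane (giving the $(\sqrt{2}\varepsilon n+1)^2$ prefactor), and bound the per-class count by $2^{2\sigma n}$ on the disagreement sector and by the binomial-entropy bound evaluated at the endpoint $p-\sigma+\varepsilon$ (using monotonicity of $h$ below $1/2$) on the agreement sector. The only nit is your feasibility discussion for the empty case: adding the two upper typicality constraints already forces emptiness whenever $\sigma>p+\varepsilon$, so there is no delicate "$b=0$ still feasible" subcase to check, and emptiness for $\sigma>p+2\varepsilon$ follows a fortiori.
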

    Combining inequalities \eqref{eq:ABX}, \eqref{eq:EHZ}, and Lemma \ref{lem:typ}, we obtain 
    \begin{align}
    \delta_{b}(n,\theta,\Phi[\varepsilon,\cC,F]) \le  (\sqrt{2}\varepsilon n + 1)^2 2^{ n \left[ 
    (1-2\sigma) h\left( \frac{p- \sigma + \varepsilon}{1-2\sigma}\right) + 2\sigma \right] -l_A} + \varepsilon_A.     
    \end{align}

\subsection{Asymptotic rate}
As a generalization of i.i.d. channel model, the UsNC is expected to have good asymptotic behavior under certain conditions. This intuition is expressed in Theorem~\ref{thm:asym}.

Before that, we define the function $g_p: [0, p] \to  [2p , h(p)]$ as
\begin{align} \label{eq:FGP}
g_p(\sigma) = 
(1-2\sigma) h\left( \frac{p-\sigma}{1-2\sigma}\right) + 2\sigma.
\end{align}

Apparently, $g_p$ is monotonically decreasing, thus the inverse function $g_p^{-1}$ is well-defined.

\begin{theorem} \label{thm:asym}
    If a sequence of UsNC $\{(n,\theta_n)\}$ with $\theta_n= (p,\varepsilon_{A,n} ,l_{A,n},\varepsilon_{B,n}, l_{B,n})$ satisfies
    \begin{align}
        & \lim_{n \to \infty} \varepsilon_{A,n} =
        \lim_{n \to \infty} \varepsilon_{B,n} =0 , \label{eq:CC1}\\
        & \liminf_{n \to \infty} \frac{l_{A,n}}{n} = \xi_A \in [2p, h(p)], \label{eq:CC2}\\
        & \liminf_{n \to \infty} \frac{l_{B,n}}{n} = \xi_B \in [0,h(p)], \label{eq:CC3}
    \end{align}
    then the following rate is achievable:
    \begin{align*} 
        R = \xi_B -  h(2g_p^{-1}(\xi_A)).
    \end{align*}
\end{theorem}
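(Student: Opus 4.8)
The plan is to choose the code parameters $\sigma$ and the message set size $|\cM|$ so that the three error bounds in Theorem~\ref{thm:main} all vanish, and then optimize the resulting rate. First I would fix a small $\varepsilon>0$ and, for each $n$, pick a linear code $\cC_n\subset\{0,1\}^n$ with relative distance $2\sigma$, where $\sigma$ is a constant to be determined; the existence of such codes with $\log|\cC_n|/n \to 1 - h(2\sigma)$ follows from the Gilbert--Varshamov bound (this is a standard fact I would cite). Then I would analyze each of the three bounds \eqref{EV-C}, \eqref{eq:hiding}, \eqref{eq:binding} in turn under the hypotheses \eqref{eq:CC1}--\eqref{eq:CC3}.

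For completeness, \eqref{EV-C} gives $\delta_c \le 8\cdot 2^{-n\varepsilon^2}\to 0$ with no further constraint. For binding, \eqref{eq:binding} has exponent $n\big[(1-2\sigma)h\big(\tfrac{p-\sigma+\varepsilon}{1-2\sigma}\big)+2\sigma\big] - l_{A,n}$; since $\varepsilon_{A,n}\to 0$ and $l_{A,n}/n \to \xi_A$, this vanishes (the polynomial prefactor $(\sqrt2\varepsilon n+1)^2$ is harmless) provided $\xi_A > (1-2\sigma)h\big(\tfrac{p-\sigma+\varepsilon}{1-2\sigma}\big)+2\sigma$. Recognizing the right-hand side as $g_p(\sigma)$ evaluated with a shifted argument, and letting $\varepsilon\downarrow 0$, the binding condition becomes $\xi_A > g_p(\sigma)$, i.e. $\sigma > g_p^{-1}(\xi_A)$ since $g_p$ is decreasing. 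For hiding, \eqref{eq:hiding} has exponent $\tfrac12(n + \log|\cM| - \log|\cC_n| - l_{B,n})$; using $\log|\cC_n|/n \to 1-h(2\sigma)$ and $l_{B,n}/n\to\xi_B$, together with $\varepsilon_{B,n}\to 0$, this vanishes provided $\limsup \tfrac{\log|\cM_n|}{n} < h(2\sigma) + \xi_B - 1 \cdot 0$... more precisely $< \xi_B - h(2\sigma)$ after the $n$ and $\log|\cC_n|$ terms combine as $n - \log|\cC_n| \to n\, h(2\sigma)$, wait — $n + \log|\cM| - \log|\cC_n| = \log|\cM| + (n-\log|\cC_n|) \to n(R' + h(2\sigma))$, so I need $R' + h(2\sigma) - \xi_B < 0$, i.e. $R' < \xi_B - h(2\sigma)$.

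Combining, for any $\sigma > g_p^{-1}(\xi_A)$ the protocol achieves any rate $R' < \xi_B - h(2\sigma)$, hence the supremum over admissible $\sigma$ gives the achievable rate $\xi_B - h\big(2g_p^{-1}(\xi_A)\big)$ by taking $\sigma \downarrow g_p^{-1}(\xi_A)$ and using continuity of $h$. I would then package this: given target $R < \xi_B - h(2g_p^{-1}(\xi_A))$, choose $\sigma$ slightly above $g_p^{-1}(\xi_A)$ so that $R < \xi_B - h(2\sigma)$ still holds, choose $\varepsilon$ small enough that the binding exponent is still negative with the $\varepsilon$-shift, choose codes $\cC_n$ by Gilbert--Varshamov, set $|\cM_n| = \lfloor 2^{nR}\rfloor$, and verify all four limits in Definition~\ref{def:crate}.

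The main obstacle I anticipate is handling the $\varepsilon$-shift in the binding exponent cleanly: one must argue that $(1-2\sigma)h\big(\tfrac{p-\sigma+\varepsilon}{1-2\sigma}\big)+2\sigma \to g_p(\sigma)$ as $\varepsilon\to 0$ and that this convergence is uniform enough to interleave with the limit $\sigma \downarrow g_p^{-1}(\xi_A)$ — i.e. a careful $\varepsilon$-$\sigma$ order-of-limits argument, plus checking the degenerate boundary cases $\xi_A = 2p$ (forcing $\sigma\to p$, where the argument of $h$ tends to $0$) and $\xi_A = h(p)$ (forcing $\sigma\to 0$). A secondary technical point is confirming that Gilbert--Varshamov codes of relative distance $2\sigma < 2p < 1$ indeed satisfy the hypothesis $d = 2\sigma n < n/2$ required by the protocol and Lemma~\ref{lem:typ}, which holds whenever $\sigma < 1/4$; since $g_p^{-1}(\xi_A) \le p < 1/2$ one may need $p<1/4$ or a small-$\sigma$ restriction, a point worth flagging.
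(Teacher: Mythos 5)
Your proposal is correct and follows essentially the same route as the paper's proof: pick the code via Gilbert--Varshamov with relative distance $2\sigma$ for $\sigma$ slightly above $g_p^{-1}(\xi_A)$, size $|\cM_n|$ to the target rate, and check that each bound of Theorem~\ref{thm:main} vanishes, with the binding condition reducing to $\xi_A > g_p(\sigma)$ and the hiding condition to $R' < \xi_B - h(2\sigma)$. The $\varepsilon$--$\sigma$ order-of-limits issue you flag is sidestepped in the paper by taking $\varepsilon_n = n^{-1/3} \to 0$, and your observation about the constraint $d = 2\sigma n < n/2$ (i.e.\ $\sigma < 1/4$) is a legitimate edge case the paper does not explicitly address.
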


\begin{proof}
For any real number $\varepsilon'$, we choose a protocol sequence $\{\Phi_n[\varepsilon_n, \cC_n, F_n]\}$ with the code sequence $\{\cC_n\}$ and a sequence of balanced UHF families $\{F_n\}$, where $F_n = \{f_{s,n}\}_{s\in \cS}: \cC_n \to \cM_n$, satisfying the following conditions.
\begin{enumerate}[label = (\Roman*)]
    \item $\varepsilon_n = \frac{1}{\sqrt[3]{n}}$. \label{it:cond0}
    \item $\cC_n$ is a code with Hamming distance $2 (g_{p}^{-1}(\xi_A) + \varepsilon') n$. \label{it:cond1}
    \item $\log | \ca{C}_n|=  (1-h(2 g_{p}^{-1}(\xi_A) +  2\varepsilon' )-\varepsilon')n$. \label{it:cond2}
    \item $\log | \ca{M}_n|=  (\xi_B - h(2g_p^{-1}(\xi_A) -3 \varepsilon') - \varepsilon')n$. \label{it:cond3}
\end{enumerate}

Condition \ref{it:cond3} implies that, for any $ R' < R$, there exists an $\varepsilon'>0$ such that $\log |\cM_n|/n \to R'$.

Next we show 
$\lim_{n\to \infty}\delta_{c}(\theta, \Phi_n[\varepsilon_{n},\cC_n,F_n]) =0
$.
The Gilbert-Varshamov bound guarantees the existence of linear codes $\{\ca{C}_n\}$ to satisfy \ref{it:cond1} and \ref{it:cond2} for large $n$.
Combining the completeness result in equation \eqref{EV-C} and the condition~\ref{it:cond0} we have 
\begin{align*}
    \lim_{n\to \infty}\delta_{c}(n,\theta, \Phi_n[\varepsilon_{n},\cC_n,F_n])=\lim_{n \to \infty} 8 \cdot 2^{-4n\varepsilon_n^2} =0.
\end{align*}

To show
    $\lim_{n\to \infty}\delta_{h}(\theta, \Phi[\varepsilon_{n},\cC_n,F_n])=0$, 
we recall the hiding result in equation~\eqref{eq:hiding}.
With equation~\eqref{eq:CC1}, it is sufficient to show that the part $2^{\frac{1}{2}(\log |\cM| + n - \log |\cC| - l_B)}$ 
in equation~\eqref{eq:hiding} goes to zero.
It holds because the exponent term goes to $-\infty$:
\begin{align*}
&\log |\cM_n| + n - \log |\cC_n| - l_{B} \\
={}& n \left[ (\xi_B - h(2g_p^{-1}(\xi_A))-\varepsilon')+ 1
-(1-h(2g_p^{-1}(\xi_A) - 2\varepsilon' )-\varepsilon') -
\frac{l_B}{n} \right] \\
={}& n\left[ \xi_B - \frac{l_B}{n} +  h(2g_p^{-1}(\xi_A) - 2\varepsilon' ) - h(2g_p^{-1}(\xi_A) - 3 \varepsilon') \right] \to - \infty .
\end{align*}

To show 
    $\lim_{n\to \infty}\delta_{b}(\theta, \Phi[\varepsilon_{n},\cC_n,F_n])=0$, 
we recall the binding result in equation~\eqref{eq:binding}.
With equation~\eqref{eq:CC1}, it is sufficient to show that the part $
2^{ n \left[ 
        (1-2\sigma) h\left( \frac{p-\sigma + 2\varepsilon}{1-2\sigma}\right) + 2\sigma \right]-l_A}
$ goes to zero.
From condition \ref{it:cond1}, we have $\sigma=g_p^{-1}(\xi_A) + \varepsilon' $, i.e., $\xi_A = g_p(\sigma -\varepsilon')$. 
Then 
\begin{align*}
& \limsup_{n \to \infty} \; (1-2\sigma) h\left( \frac{p-\sigma + 2\varepsilon_n}{1-2\sigma}\right) + 2\sigma
-\frac{l_A}{n} \\
={}&
(1-2\sigma) h\left( \frac{p-\sigma}{1-2\sigma}\right) + 2\sigma
-g_p(\sigma - \varepsilon') \\
={}& g_p(\sigma)-g_p(\sigma - \varepsilon')<0.
\end{align*}
Therefore we have $
2^{ n \left[ 
        (1-2\sigma) h\left( \frac{p-\sigma + 2\varepsilon}{1-2\sigma}\right) + 2\sigma \right]-l_A} \to 0
$.
\end{proof}

Figure~\ref{fig:rate} is a numeric illustration of the asymptotic rate versus channel parameters when the channel parameter sequence $\{\theta_n\}$ satisfies \eqref{eq:CC1}, \eqref{eq:CC2}, \eqref{eq:CC3} and $p=0.1$. Commitment becomes impossible when $\xi_A\lesssim 0.82 h(p)$ or $\xi_B = 0$. Additionally, the rate bears higher tolerance on channel $W^B$ than $W^A$.

\begin{figure}[h!]
    \centering
    \includegraphics[width=0.6\linewidth]{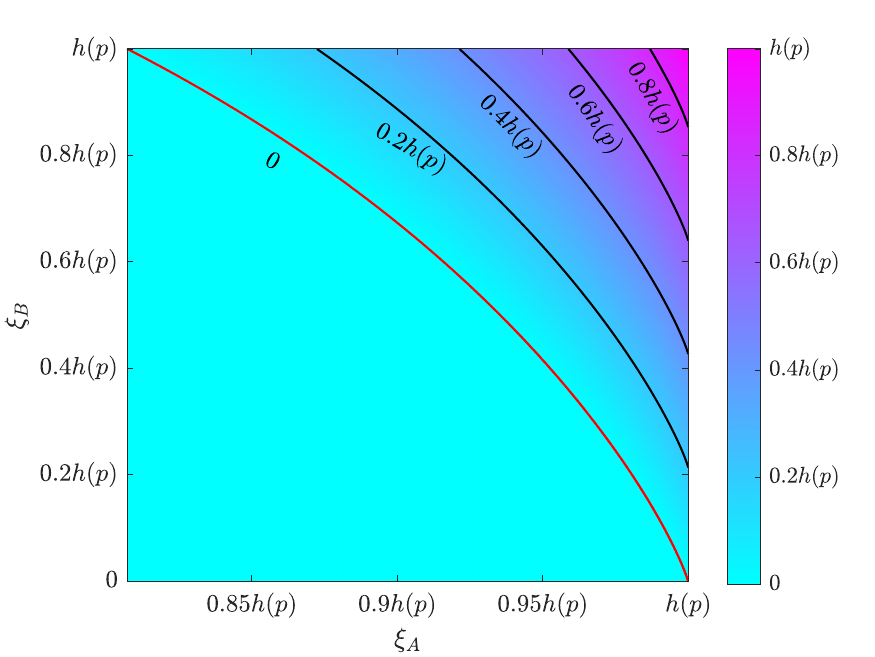}
    \caption{The rate $R$ as a function of $\xi_A$ and $\xi_B$ when $p=0.1$ ($h(p) \approx 0.469$). The X and Y axes are $\xi_A$ and $\xi_B$ respectively. The rate hits zero at the red line and below, and hits maximum value $h(p)$ at the point $\xi_A=\xi_B=h(p)$.}
    \label{fig:rate}
\end{figure}

Theorem \ref{thm:asym} provides a commitment rate for general channels $W^A$ and $W^B$.
In the special case when $W^A$ and $W^B$ are fixed as i.i.d. BSC, we demonstrate that Theorem~\ref{thm:asym} recovers the earlier result of~\cite{winter2003commitment,crepeau2020commitment}. 
To establish this connection, we first introduce Lemma~\ref{lem:iidc}, which examines the asymptotic behavior of i.i.d. BSC (see Appendix~\ref{app:plem} for the proof).

\begin{lemma} \label{lem:iidc}
    For $n$-fold BSC $W^n: \cX^n \to \cZ^n$ ($\cX=\cZ =\{0,1\}$) with transition probability $p$, there exist parameters $\mu_n, l(n,p)$ such that 
    \begin{align*}
    H_{\min}^{\mu_n} (\bZ)_{W^n(\bx)} & \ge l(n,p) \textnormal{ for all } \bx \in \{0,1\}^n, \\
    H_{\min}^{\mu_B}(\bX|B)_{W^B \times U_{\cX^n}}  & \ge l(n,p), 
    \end{align*}
    and
    \begin{align*}
        & \lim_{n \to \infty} \mu_n = 0, \\
        & \lim_{n \to \infty} \frac{l(n,p)}{n} = h(p).
    \end{align*}
\end{lemma}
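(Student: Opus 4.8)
The plan is to produce a single smoothing radius $\mu_n$ and a single rate parameter $l(n,p)$ that serve both inequalities, by smoothing the relevant states onto the conditional typical set. First I would fix a sequence $\varepsilon_n$ with $\varepsilon_n \to 0$ and $n\varepsilon_n^2 \to \infty$ (for instance $\varepsilon_n = n^{-1/3}$, matching the choice used in the proof of Theorem~\ref{thm:asym}), set $\mu_n := 8\cdot 2^{-n\varepsilon_n^2} \to 0$, and take
\begin{align*}
    l(n,p) := n\Bigl[(p-\varepsilon_n)\log\tfrac{1}{p} + (1-p+\varepsilon_n)\log\tfrac{1}{1-p}\Bigr],
\end{align*}
so that $l(n,p)/n \to p\log\tfrac{1}{p} + (1-p)\log\tfrac{1}{1-p} = h(p)$, as required by the last two displayed limits.

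For the first inequality, fix $\bx\in\{0,1\}^n$ and let $\rho'_{\bZ|\bx}$ be the subnormalized state obtained from $W^n(\bx)$ by deleting every $\bz\notin\cT_{W,\varepsilon_n}^n(\bx)$. I would first observe that the generalized trace distance between a distribution and its restriction to a subset equals the deleted probability mass, so $\GTD(\rho'_{\bZ|\bx},W^n(\bx)) = \Pr_{\bZ\sim W^n(\bx)}[\bZ\notin\cT_{W,\varepsilon_n}^n(\bx)] \le \mu_n$ by Lemma~\ref{lem:unit}, uniformly in $\bx$. Then, since $P(\bz\mid\bx) = p^{\HD(\bx,\bz)}(1-p)^{n-\HD(\bx,\bz)}$ is strictly decreasing in $\HD(\bx,\bz)$ when $p<\tfrac12$, the largest probability surviving in $\rho'_{\bZ|\bx}$ lies at the lower edge of the typical shell, i.e. is at most $p^{n(p-\varepsilon_n)}(1-p)^{n(1-p+\varepsilon_n)}$. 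Hence $H_{\min}^{\mu_n}(\bZ)_{W^n(\bx)} \ge H_{\min}(\bZ)_{\rho'_{\bZ|\bx}} \ge l(n,p)$ for every $\bx$.

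For the second inequality, note that for the $n$-fold BSC the side register $B$ is just the classical output $\bZ$, so $W^n\times U_{\cX^n}$ is the joint law $P_{\bX\bZ}(\bx,\bz) = 2^{-n}p^{\HD(\bx,\bz)}(1-p)^{n-\HD(\bx,\bz)}$, which is symmetric under exchanging $\bx$ and $\bz$. Let $\tau_{\bX\bZ}$ be the subnormalized state obtained by keeping only the pairs with $\bz\in\cT_{W,\varepsilon_n}^n(\bx)$; averaging Lemma~\ref{lem:unit} over the uniform input gives $\GTD(\tau_{\bX\bZ},P_{\bX\bZ}) \le \mu_n$. Using the classical conditional min-entropy formula \eqref{eq:def-c}, $H_{\min}(\bX\mid\bZ)_\tau = -\log\sum_{\bz}\max_{\bx}\tau(\bx,\bz)$; by the same monotonicity each inner maximum is at most $2^{-n}p^{n(p-\varepsilon_n)}(1-p)^{n(1-p+\varepsilon_n)}$, and summing over the $2^n$ values of $\bz$ yields $\sum_{\bz}\max_{\bx}\tau(\bx,\bz) \le p^{n(p-\varepsilon_n)}(1-p)^{n(1-p+\varepsilon_n)}$. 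Thus $H_{\min}^{\mu_n}(\bX\mid B)_{W^n\times U_{\cX^n}} \ge l(n,p)$, the same expression as above, which together with $\mu_n\to0$ and $l(n,p)/n\to h(p)$ completes the argument.

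I do not anticipate a genuine obstacle; this is a textbook typicality estimate. The two points that warrant care are (i) verifying that $\GTD$ of a distribution against its restriction to a set equals exactly the deleted mass (so Lemma~\ref{lem:unit} plugs in directly, including the $\tfrac12|\tr\rho-\tr\sigma|$ term), and (ii) recognising that the output-entropy bound and the conditional-entropy bound collapse to the same quantity, which is precisely the $\bx\leftrightarrow\bz$ symmetry of the BSC with uniform input. A minor bookkeeping nuisance is the integer rounding of $n(p\pm\varepsilon_n)$ in the Hamming shell, but since $w\mapsto p^{w}(1-p)^{n-w}$ is decreasing on all of $\mathbb{R}$, evaluating it at the real number $n(p-\varepsilon_n)$ is a valid upper bound that absorbs the rounding with no loss in the asymptotic rate.
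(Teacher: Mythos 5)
Your proposal is correct and follows essentially the same route as the paper's proof: the same choice $\varepsilon_n=n^{-1/3}$, $\mu_n=8\cdot 2^{-n\varepsilon_n^2}$, the same $l(n,p)$ (your expression is algebraically identical to the paper's $n(h(p)-\varepsilon_n\log\frac{1-p}{p})$), and the same smoothing by restricting $W^n(\bx)$ and $W^n\times U_{\cX^n}$ to the conditional typical shell before bounding the maximal surviving probability at the lower edge of the shell. No gaps.
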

Lemma \ref{lem:iidc} effectively defines a sequence of UsNC from fixed BSC. Then we obtain the following corollary.
\begin{corollary} \label{coro:iidc}
If $W^n, W^A$ and $W^B$ are $n$-fold BSC with transition probability $p, p_A$ and $p_B$ respectively, then the following commitment rate is achievable:
\begin{align} \label{eq:iidrate}
    R = h(p_B) - h(2g_p^{-1}(h(p_A))). 
\end{align}
\end{corollary}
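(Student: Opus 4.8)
The plan is to obtain this corollary as an immediate consequence of Lemma~\ref{lem:iidc} together with Theorem~\ref{thm:asym}: Lemma~\ref{lem:iidc} turns each fixed i.i.d.\ BSC into an admissible sequence of smooth-entropy parameters, and Theorem~\ref{thm:asym} then reads off the rate. So the real content has already been proved, and what remains is bookkeeping.

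First I would apply Lemma~\ref{lem:iidc} twice. Applied to the $n$-fold BSC with transition probability $p_A$, it yields a smoothing sequence $\varepsilon_{A,n}$ (the $\mu_n$ of the lemma) with $\varepsilon_{A,n}\to 0$ and a bound $l_{A,n}:=l(n,p_A)$ such that $H_{\min}^{\varepsilon_{A,n}}(\bZ)_{W^A(\bx)}\ge l_{A,n}$ for all $\bx$ and $l_{A,n}/n\to h(p_A)$; here I keep only the first, unconditional bound of the lemma, since the dishonest-Alice condition \textbf{(C2)} is an unconditional min-entropy condition on the channel output. Applied to the $n$-fold BSC with transition probability $p_B$, it yields $\varepsilon_{B,n}\to 0$ and $l_{B,n}:=l(n,p_B)$ with $H_{\min}^{\varepsilon_{B,n}}(\bX|B)_{W^B\times U_{\cX^n}}\ge l_{B,n}$ and $l_{B,n}/n\to h(p_B)$; here I keep the second, conditional bound, which matches the dishonest-Bob condition \textbf{(C3)}. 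Hence the $n$-fold BSCs $W^A$ and $W^B$ are legitimate instantiations of \textbf{(C2)} and \textbf{(C3)} for the parameter list $\theta_n:=(p,\varepsilon_{A,n},l_{A,n},\varepsilon_{B,n},l_{B,n})$, and $\{(n,\theta_n)\}$ is a well-defined sequence of $(n,\theta_n)$-UsNC models.

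Next I would verify the hypotheses of Theorem~\ref{thm:asym}: \eqref{eq:CC1} holds since $\varepsilon_{A,n},\varepsilon_{B,n}\to 0$, and \eqref{eq:CC2}, \eqref{eq:CC3} hold with $\xi_A=h(p_A)$ and $\xi_B=h(p_B)$ by the asymptotics supplied by Lemma~\ref{lem:iidc}. The only delicate point is the membership constraints $\xi_A\in[2p,h(p)]$ and $\xi_B\in[0,h(p)]$; these hold precisely in the regime where the honest channel is at least as noisy as the dishonest ones ($p_A,p_B\le p$) and $h(p_A)$ stays above the binding threshold $2p$, which is the range in which the stated rate is meaningful (outside it $g_p^{-1}(\xi_A)$ is undefined, matching the expectation that commitment is then impossible). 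Invoking Theorem~\ref{thm:asym} gives the achievable rate $R=\xi_B-h\big(2g_p^{-1}(\xi_A)\big)=h(p_B)-h\big(2g_p^{-1}(h(p_A))\big)$, which is \eqref{eq:iidrate}. I expect no genuine obstacle: the main risk is a sign/role mix-up (the $p_A$-BSC must drive the binding/\textbf{(C2)} side and the $p_B$-BSC the hiding/\textbf{(C3)} side) and tracking the domain of $g_p^{-1}$. As a consistency check, taking $p_A=p_B=p$ gives $g_p^{-1}(h(p))=0$ and hence $R=h(p)-h(0)=h(p)$, recovering the commitment capacity $h(p)$ of the fixed BSC of~\cite{winter2003commitment}.
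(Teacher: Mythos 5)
Your proposal is correct and follows essentially the same route as the paper: instantiate the UsNC parameter list $\theta_n=(p,\mu_n,l(n,p_A),\mu_n,l(n,p_B))$ via Lemma~\ref{lem:iidc} and then invoke Theorem~\ref{thm:asym} with $\xi_A=h(p_A)$, $\xi_B=h(p_B)$. Your added remarks on the domain constraint $\xi_A\in[2p,h(p)]$ and the $p_A=p_B=p$ consistency check are sensible refinements of the same argument, not a different approach.
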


\begin{proof}
    According to Lemma~\ref{lem:iidc}, the $n$-fold BSC can be seen as UsNC model with parameter $\theta_n =(p,\varepsilon_{A,n},l_{A,n},\varepsilon_{B,n}, l_{B,n})$ can be taken as  $\varepsilon_{A,n} = \varepsilon_{B,n} = \mu_n, l_{A,n} = l(n,p_A), l_{B,n} = l(n,p_B)$. The rate in equation~\eqref{eq:iidrate} follows immediately from Theorem~\ref{thm:asym}.
\end{proof}

Cr\'{e}peau et al.~\cite{crepeau2020commitment} consider the case $p_A=p_B \le p$, as in Corollary~\ref{coro:iidc}, and derive the capacity 
\begin{align} \label{eq:capa}
    C = h(p_A) - h\left( \frac{p-p_A}{1-2p_A}\right).
\end{align}
When $p_A =p_B < p$, denote $p'  = \frac{p-p_A}{1-2p_A}$, then the following argument shows that the capacity $C$ exceeds the rate $R$ presented in equation~\eqref{eq:iidrate}:
\begin{align*}
    & h(p_A)=h\left( \frac{p-p'}{1-2p'}\right) < 
    g_p(p') < g_p \left(\frac{p'}{2} \right) \\
    \implies & 2g_p^{-1}(h(p_A)) > p' \\
    \implies & h(2g_p^{-1}(h(p_A))) > h(p').
\end{align*}
This gap between our achievable rate $R$ and the capacity $C$ in equation~\eqref{eq:capa} is expected and reflects the trade-off between generality and tightness. 
The capacity $C$ is derived under a restricted i.i.d. adversary, so this additional structure can be exploited to obtain a higher rate,
while our rate $R$ is derived under a stronger class of adversaries.
When $p_A = p_B =p$, we have $g_p^{-1}(h(p_A))=0$, so both $R$ and $C$ reduce to the previously established capacity $h(p)$ in~\cite{winter2003commitment,imai2004rates}, indicating that our general model correctly recovers known results.

\section{Application in Noisy-Quantum-Storage Model} \label{sec:Application}
To demonstrate the utility of our UsNC model, we present an example of such channel model constructed in the noisy-quantum-storage (NQS) model~\cite{damgaard2007tight,wehner2008cryptography,wehner2008cryptographyb,schaffnerchristian2009robust}.
As a natural generalization of bounded-quantum-storage model, the NQS model is particularly relevant because building noiseless quantum memory is still challenging with current technology.
More specifically, 
the noisy quantum storage is characterized by a quantum channel $\cF: \cB(\cH_{\text{in}}) \to \cB(\cH_{\text{out}})$, which can be forced by certain time delay $T$ in the protocol. 
Any quantum information stored by the adversary must go through channel $\cF$, which inherently weakens the adversary's capability.
In addition, a noiseless quantum channel is assumed to be available in the NQS model.

\subsection{Protocol}
The following protocol constructs UsNC in the NQS model characterized by channel $\cF$.

\begin{algorithm}[h!]
\caption{A noisy classical channel from noisy quantum storage} \label{pt:chnqs}
\textbf{Channel input}: $\bX \in \{0,1\}^n$ \\
\textbf{Channel output}: $\bZ \in \{0,1\}^n$
\begin{enumerate}[leftmargin=*]
    \item Alice generates a random bit string $\Theta^n \in \{0,1\}^n$. For each $i \in [n]$, Alice sends to Bob a quantum register $B_i$ in state $\ket{\psi}_{B_i} = H^{\Theta_i} \ket{X_i}$,
    where $H$ is Hadamard operator.
    \item Bob generates a random bit string $\Theta'^n \in \{0,1\}^n$. For each $i \in [n]$, Bob measures observable $(\sigma_z + (-1)^{\Theta'_i} \sigma_x)/\sqrt{2}$ on register $B_i$ and obtains measurement result $K^n \in \{0,1\}^n$.
    \item Alice waits for a predetermined time $T$, and sends the string $\Theta^n$ to Bob.
    \item For each $i \in [n]$, Bob computes the final output $Z_i = K_i \oplus (\Theta_i \cdot \Theta'_i)$.
\end{enumerate}
\end{algorithm}

The properties of the channel constructed in Protocol \ref{pt:chnqs} are formally stated in the following theorem. The proof is deferred to Section \ref{sec:proofnqs}.
\begin{theorem} \label{thm:nqsc}
    For any $\lambda_A, \lambda_B \in (0,\frac{1}{2})$, in the noisy-quantum-storage model characterized by channel $\cF: \cB(\cH_{in}) \to \cB(\cH_{out})$, Protocol \ref{pt:chnqs} constructs a $(n,\theta)$-UsNC with  $\theta = (\sin^2(\pi/8), \varepsilon_A, l_A, \varepsilon_B, l_B)$ dependent on $\lambda_A,\lambda_B$, where
    \begin{subequations} \label{eq:LEP}
    \begin{align}
            l_A &= (h(\sin^2(\pi/8)) - 2\lambda_A)n, \label{eq:LAE}\\
            \varepsilon_A &= \exp \left(- \frac{\lambda_A^2 n}{32 (1 - \log \lambda_A)^2} \right), \label{eq:EPA}\\
            l_B &=  - \log P_{\mathrm{succ}}^{\cF}\left( \left(\tfrac{1}{2}-\lambda_B\right)n \right),  \label{eq:LBE}\\
            \varepsilon_B &= 2 \exp \left(- \frac{(\lambda_B/4)^2n}{32(2+ \log \frac{4}{\lambda_B} )^2} \right). \label{eq:EPB}
    \end{align}
    \end{subequations}
    Here $P_{\mathrm{succ}}^\cF(nR)$ is the maximized average probability of correctly transmitting message $x \in \{0,1\}^{nR}$ through channel $\cF$:
    \begin{align}
        P_{\mathrm{succ}}^\cF(nR) = \max_{\{\rho_x\}, \{D_x\} } \frac{1}{2^{nR}} \sum_{x \in \{0,1\}^{nR} } \tr (D_x \cF (\rho_x)), \notag
    \end{align}
    where $\{\rho_x\}$ is a list of elements in $\cB(\cH_{in})$ that defines an encoder map $\{0,1\}^{nR} \to \cB(\cH_{in})$, and $\{ D_x \}$ is a decoding positive operator-valued measurement on $\cH_{\text{out}}$.
\end{theorem}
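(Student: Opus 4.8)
The statement asks us to verify the three defining conditions \textbf{(C1)}--\textbf{(C3)} of Definition~\ref{def:UsNC} for the channel realised by Protocol~\ref{pt:chnqs} with the stated parameters. Condition \textbf{(C1)} is a direct one-qubit computation, which I would do first so as to fix $p$. Conditioning on any choice of the honest bases $\Theta_i$ (Alice) and $\Theta'_i$ (Bob), the rank-one projectors of the observable $(\sigma_z+(-1)^{\Theta'_i}\sigma_x)/\sqrt2$ onto its $\pm1$ eigenspaces overlap the sent state $H^{\Theta_i}\ket{X_i}$ with probabilities $\cos^2(\pi/8)$ and $\sin^2(\pi/8)$; carrying out the correction $Z_i=K_i\oplus(\Theta_i\Theta'_i)$ one checks that $\Pr[Z_i=X_i]=\cos^2(\pi/8)$ \emph{independently} of $(\Theta_i,\Theta'_i)$. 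Since the $n$ qubits and all internal coins are independent, the honest channel is \emph{exactly} the $n$-fold $\mathrm{BSC}$ with crossover probability $p=\sin^2(\pi/8)$.

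The heart of the proof is \textbf{(C2)}. After tracing out any side register a dishonest Alice retains, her strategy is a state $\rho$ on the $n$ transmitted qubits together with a (possibly randomised) string $\tilde\Theta^n$ that she later sends in place of $\Theta^n$, and the output $\bZ$ is produced by honest Bob's measurement and the correction step. The plan is to prove the exact identity
\[
\Pr[\bZ=\bz] = \sum_{\bx} W^n(\bz\mid\bx)\,q(\bx), \qquad q(\bx):=\langle H^{\tilde\Theta^n}\bx\,|\,\rho\,|\,H^{\tilde\Theta^n}\bx\rangle ,
\]
where $W^n$ is the same $\mathrm{BSC}(\sin^2(\pi/8))$ as in \textbf{(C1)} and $\ket{H^{\tilde\Theta^n}\bx}:=\bigotimes_i H^{\tilde\Theta_i}\ket{x_i}$ is the $\mathrm{BB84}$ state with bases $\tilde\Theta^n$ and bits $\bx$. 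To obtain it, write $\Pr[\bZ=\bz]=2^{-n}\tr(\rho\bigotimes_i A_i^{(z_i)})$, where $A_i^{(z_i)}$ is the sum over Bob's two possible bases of the measurement projector yielding $z_i$ after the correction; a short one-qubit calculation gives $A_i^{(z_i)}=2(1-p)\,\Pi_{z_i}+2p\,\Pi_{\bar z_i}$ with $\Pi_b$ the projector onto $H^{\tilde\Theta_i}\ket{b}$. Expanding the tensor product and recognising the product of eigenvalues as $2^n$ times a $\mathrm{BSC}$ transition probability reassembles the claimed sum, and since $\{\ket{H^{\tilde\Theta^n}\bx}\}_{\bx}$ is an orthonormal basis, $q$ is a genuine probability distribution with $\sum_{\bx}q(\bx)=\tr\rho=1$. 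Thus, for every Alice strategy, $\bZ$ is distributed exactly as the honest channel fed with some input distribution $q$. It then remains to lower-bound $H_{\min}^{\varepsilon_A}(\bZ)_{W^n\circ q}$ uniformly in $q$: smoothing each $W^n(\bx)$ to a subnormalised $Q^{(\bx)}$ with $H_{\min}(Q^{(\bx)})\ge l_A$ and $\GTD(Q^{(\bx)},W^n(\bx))\le\varepsilon_A$, then mixing, gives a valid witness for $W^n\circ q$ (joint convexity of $\GTD$ and $\max_{\bz}\sum_{\bx}q(\bx)Q^{(\bx)}(\bz)\le 2^{-l_A}$), so the problem reduces to point-mass inputs, i.e.\ to the smooth min-entropy of an i.i.d.\ $\mathrm{Bernoulli}(p)$ source. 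The standard i.i.d.\ estimate underlying Lemma~\ref{lem:iidc} then gives $H_{\min}^{\varepsilon_A}(\bZ)_{W^n(\bx)}\ge n\,h(\sin^2(\pi/8))-2\lambda_A n$ with smoothing error of the form \eqref{eq:EPA}, which is exactly \eqref{eq:LAE}--\eqref{eq:EPA}.

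For \textbf{(C3)}, honest Alice sends clean $\mathrm{BB84}$ states $H^{\Theta_i}\ket{X_i}$ with uniform $\Theta^n$, and a dishonest Bob's final side information consists of $\Theta^n$ (revealed only after the delay $T$), his pre-storage classical register $K$, and the output $B'=\cF(Q)$ of the storage channel applied to whatever quantum memory $Q$ he keeps. The plan is to invoke the noisy-quantum-storage machinery (as in~\cite{schaffnerchristian2009robust,konig2012unconditional}) in two stages. First, a smooth entropic uncertainty relation for $\mathrm{BB84}$ states with hidden bases, combined with min-entropy splitting, shows that before the reveal $K$ alone leaves Bob with smooth min-entropy $\approx n/2$ about the uniform $\bX$ (heuristically, on the $\approx n/2$ positions where his measurement disagreed with $\Theta_i$ he learns essentially nothing and cannot isolate those positions without $\Theta^n$). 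Second, a chain-rule / min-entropy-transmission bound shows that routing $Q$ through $\cF$ before learning $\Theta^n$ can recover only $\log(1/P_{\mathrm{succ}}^{\cF}(\cdot))$-worth of that missing entropy, at the rate $(\frac12-\lambda_B)n$. Combining the two yields $H_{\min}^{\varepsilon_B}(\bX\mid B)_{W^B\times U_{\cX^n}}\ge-\log P_{\mathrm{succ}}^{\cF}((\frac12-\lambda_B)n)$, i.e.\ \eqref{eq:LBE}; the error $\varepsilon_B$ collects the $\mathrm{Binomial}(n,\frac12)$ fluctuation of the number of matched bases around $n/2$ together with the losses in the uncertainty relation and min-entropy splitting, giving \eqref{eq:EPB}.

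I expect the main obstacle to be twofold. The conceptually delicate step is the operator identity in \textbf{(C2)}: recognising that the randomised Breidbart-type measurement followed by the $\Theta_i\Theta'_i$ correction \emph{exactly} simulates a $\mathrm{BSC}$ fed with a $\rho$-dependent input distribution; once this is in place, \textbf{(C2)} reduces cleanly to i.i.d.\ entropy estimates. The technically heavier part is \textbf{(C3)}, which is not self-contained: it must be assembled from prior noisy-storage results (the $\mathrm{BB84}$ uncertainty relation, min-entropy splitting, and the $P_{\mathrm{succ}}^{\cF}$ transmission bound), and the real work lies in threading the parameters so that the two independent sources of smoothing error combine into the single closed form \eqref{eq:EPB} and the rate $(\frac12-\lambda_B)n$ in \eqref{eq:LBE} emerges with the claimed constants.
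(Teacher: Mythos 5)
Your proposal is correct in substance but takes a genuinely different route from the paper on condition \textbf{(C2)}, and a heavier one on \textbf{(C3)}. For \textbf{(C1)} you and the paper do the same one-qubit computation. For \textbf{(C2)}, the paper does \emph{not} prove your simulation identity: it instead bounds each round separately via the state-independent entropic uncertainty relation of Tomamichel--Renner applied to the effective per-round POVM $\{E_0,E_1\}$ (whose largest eigenvalue is $c=\cos^2(\pi/8)$), converts the resulting min-entropy bound $H_{\min}(Z_i\mid Z^{i-1}=z^{i-1})\ge\log(1/c)$ into a Shannon-entropy bound $H(Z_i\mid Z^{i-1}=z^{i-1})\ge h(c)$, and then applies the Azuma-type Theorem~\ref{thm:amz} to the possibly correlated sequence $\bZ$, which is what produces the exact constants in \eqref{eq:LAE}--\eqref{eq:EPA}. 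Your operator identity $\tfrac12\bigl(P^{(0)}_{z}+P^{(1)}_{z\oplus\tilde\Theta}\bigr)=(1-p)\Pi_z+p\Pi_{\bar z}$ is correct (it checks out in both bases), and it yields the stronger structural statement that any Alice attack produces exactly a $W^n\circ q$ output; your smoothing-and-mixing reduction to point masses is also sound and mirrors a trick the paper itself uses in the binding proof. The one caveat is the final i.i.d.\ step: the typical-set estimate ``underlying Lemma~\ref{lem:iidc}'' is parametrized by $\varepsilon=n^{-1/3}$ and would give a smoothing error of a different closed form than \eqref{eq:EPA}; to reproduce the stated constants you should instead feed the i.i.d.\ case (where $H(Z_i\mid Z^{i-1}=z^{i-1})=h(p)$ exactly) into Theorem~\ref{thm:amz}, after which your route and the paper's coincide numerically. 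For \textbf{(C3)}, the paper simply observes that Alice's side of Protocol~\ref{pt:chnqs} is identical to the weak-string-erasure protocol of K\"onig--Wehner--Wullschleger and imports their Theorem~III.3 verbatim, which is where \eqref{eq:LBE} and \eqref{eq:EPB} come from; your plan to reassemble that theorem from the uncertainty relation, min-entropy splitting and the $P_{\mathrm{succ}}^{\cF}$ transmission bound is workable in principle but is substantially more labor than needed and risks not landing on the exact constants, so you should cite the result directly as the paper does.
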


\begin{remark}
    The free parameters $\lambda_A,\lambda_B$ characterize the trade-offs between the bound $l_A,l_B$ and the smoothing parameters $\varepsilon_A,\varepsilon_B$.
    A choice of smaller $\lambda_{A}$ ($\lambda_B$) leads to a stronger bound $l_{A}$ ($l_B$) at the expense of a more relaxed smoothing parameter $\varepsilon_{A}$ ($\varepsilon_B$).
\end{remark}

For equation~\eqref{eq:LBE}, if the channel $\cF$ takes the tensor product form $\cF = \cN^{\otimes n\nu}$ for storage rate $\nu$, then for some common channel like entanglement-breaking channel, there exists a positive function $\gamma^{\cN}(\cdot)$ such that for large $n$,
\begin{align}
    \forall R > C_{\cN}, P_{\mathrm{succ}}^\cF(nR) \le 2^{-\gamma^{\cN}(\frac{R}{\nu})n\nu}, \notag
\end{align}
where $C_{\cN}$ is the classical capacity of channel $\cN$.
We then have the bound on $l_B$
\begin{align}
    l_B \ge n \nu \gamma^{\cN}\left(\frac{1/2 - \lambda_B}{\nu} \right). \notag
\end{align}

The combination of Protocol~\ref{pt:chnqs} and Protocol~\ref{pt:com} establishes a string commitment protocol in the NQS model. We have the following corollary derived from Theorem~\ref{thm:main} and Theorem~\ref{thm:nqsc}.
\begin{corollary}
    The commitment protocol obtained by combining Protocol~\ref{pt:chnqs} and protocol $\Phi[\varepsilon,\cC,F]$ (Protocol~\ref{pt:com}) satisfies:
    \begin{align*}
        \delta_{c} &\le  8 \cdot 2^{- n \varepsilon^2}, \\  
        \delta_{h} &\le 2 \cdot  2^{\frac{1}{2}(n + \log |\cM| - \log |\cC| - l_B)} + 8\varepsilon_B,  \\ 
        \delta_{b} &\le   (\sqrt{2}\varepsilon n +1)^2 2^{ n \left[ 
        (1-2\sigma) h\left( \frac{p-\sigma + \varepsilon}{1-2\sigma}\right) + 2\sigma \right] - l_A}  + \varepsilon_A,
    \end{align*}
    where $p = \sin^2(\pi/8)$, and $l_A,\varepsilon_A,l_B,\varepsilon_B$ are given in equations \eqref{eq:LEP}.
\end{corollary}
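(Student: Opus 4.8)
The plan is to derive the corollary purely by composing the two main theorems already proved: Theorem~\ref{thm:nqsc} realises an $(n,\theta)$-UsNC from the noisy-quantum-storage assumption, and Theorem~\ref{thm:main} bounds the security parameters of Protocol~\ref{pt:com} on any such UsNC. So the task reduces to checking that running $\Phi[\varepsilon,\cC,F]$ with its noisy-channel uses replaced by invocations of Protocol~\ref{pt:chnqs} is \emph{exactly} Protocol~\ref{pt:com} instantiated on the channel that Theorem~\ref{thm:nqsc} guarantees; once that is granted, the three bounds follow by substitution and no fresh analysis is required.

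First I would fix arbitrary $\lambda_A,\lambda_B \in (0,\tfrac12)$ and invoke Theorem~\ref{thm:nqsc} to obtain an $(n,\theta)$-UsNC with $p=\sin^2(\pi/8)$ and $(\varepsilon_A,l_A,\varepsilon_B,l_B)$ as in \eqref{eq:LEP}. Concretely this supplies all three conditions of Definition~\ref{def:UsNC}: when both parties are honest the classical channel induced by steps~1--4 of Protocol~\ref{pt:chnqs} is the $n$-fold BSC with crossover $\sin^2(\pi/8)$ \textbf{(C1)}; when Alice is dishonest the induced channel $W^A$ obeys $H_{\min}^{\varepsilon_A}(\bZ)_{W^A(x)}\ge l_A$ for every input $x$ \textbf{(C2)}; and when Bob is dishonest the induced cq channel $W^B$ obeys $H_{\min}^{\varepsilon_B}(\bX|B)_{W^B\times U_{\cX^n}}\ge l_B$ with $l_B=-\log P^{\cF}_{\mathrm{succ}}((\tfrac12-\lambda_B)n)$ \textbf{(C3)} --- this last point being precisely where the storage channel $\cF$ enters.

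Next I would spell out the reduction between adversaries, which is the only substantive step. A dishonest Alice in the combined protocol splits her strategy into (i) her conduct during the qubit-transmission rounds and the forced delay $T$ inside the commit phase, which by construction is one of the channels covered by \textbf{(C2)}, and (ii) the classical messages $S,\oM,\cC'$ and the opening $(M,\bX)$ she sends over the noiseless channel, which is exactly the interaction $\cA_C,\cA_{R0},\cA_{R1}$ of Figure~\ref{fig:bc}(c). Symmetrically, a dishonest Bob's strategy --- including every operation he performs on the received registers before and after learning $\Theta^n$, hence any use of his noisy storage $\cF$ --- is captured by a channel satisfying \textbf{(C3)} composed with the commit-phase interaction $\cB_C$ of Figure~\ref{fig:bc}(b). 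Hence $\delta_c,\delta_h,\delta_b$ of the combined protocol are upper bounded by the same quantities for $\Phi[\varepsilon,\cC,F]$ on the UsNC furnished by Theorem~\ref{thm:nqsc}.

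Finally I would substitute $p=\sin^2(\pi/8)$, $d=2\sigma n$, and the values \eqref{eq:LAE}--\eqref{eq:EPB} into \eqref{EV-C}, \eqref{eq:hiding}, \eqref{eq:binding}, reproducing the three displayed inequalities verbatim. I expect the composability claim of the third paragraph to be the main obstacle: one must be sure that the timing structure of Protocol~\ref{pt:chnqs} (the delay $T$ that activates $\cF$) sits entirely inside the commit phase of Protocol~\ref{pt:com}, and that no quantum side information survives into the reveal phase, so that the abstract UsNC interface is faithful; with that settled the corollary is immediate.
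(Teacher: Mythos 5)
Your proposal is correct and follows essentially the same route as the paper, which states this corollary as an immediate consequence of composing Theorem~\ref{thm:nqsc} (which furnishes the $(n,\theta)$-UsNC with $p=\sin^2(\pi/8)$ and the parameters of \eqref{eq:LEP}) with Theorem~\ref{thm:main} (which bounds $\delta_c,\delta_h,\delta_b$ for $\Phi[\varepsilon,\cC,F]$ on any such channel). Your third paragraph, making explicit the reduction from adversaries in the combined protocol to adversaries against the abstract UsNC interface, is a step the paper leaves implicit, and it is the right point to be careful about.
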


\subsection{Achievable commitment rate}
The commitment rate in the NQS model is
the ratio of committed bits to the number of qubits transmitted via the noiseless quantum channel, given by
\begin{align*}
    R_{\mathrm{NQS}} = \frac{\log |\cM|}{n}.
\end{align*}
Since Protocol~\ref{pt:chnqs} yields one (honest) binary symmetric channel per transmitted qubit, we use the same notation $n$ for both quantities here.
Similar to Definition~\ref{def:crate}, $R_{\mathrm{NQS}}$ is an achievable commitment rate in the NQS model if, for any $R'<R_{\mathrm{NQS}}$, there exists a infinite sequence of commitment protocol such that $\delta_c,\delta_h,\delta_b$ vanishes and $\log |\cM|/n \to R'$.

To determine an explicit achievable commitment rate, we assume $\cF$ is an identity channel with fixed dimension $d$, i.e., $\text{dim} \cH_{\text{in}} = \text{dim} \cH_{\text{out}} = d$. This corresponds to the bounded noiseless quantum storage model. 
In this scenario, the quantity $P_{\mathrm{succ}}^{\cF}(nR)$ becomes
\begin{equation} \notag
\begin{aligned}
    P_{\mathrm{succ}}^{\cF}(nR) &= \max_{\{D_x\}, \{\rho_x\} }\frac{1}{2^{nR}}\sum_{x \in \{0,1\}^{nR}} \tr(D_x \rho_x) \\
    &\le \max_{\{D_x\}}\frac{1}{2^{nR}} \sum_{ x \in \{0,1\}^{nR} } \tr(D_x) \\
    &= \frac{1}{2^{nR}} \tr (\mathbb{I}_d) = 2^{-nR + d}.
\end{aligned}
\end{equation}
By setting $\lambda_A = \lambda_B = \frac{1}{\sqrt[3]{n}}$ in Theorem \ref{thm:nqsc}, we observe the following limits:
\begin{subequations} \label{eq:LEPLIM}
   \begin{align}
    \lim_{n \to \infty} \varepsilon_A &= 0,\\
    \lim_{n \to \infty} \varepsilon_B &= 0, \\
    \lim_{n \to \infty} \frac{l_A}{n} &= h(\sin^2(\pi/8)), \\
    \lim_{n \to \infty} \frac{l_B}{n} &= \frac{1}{2}.
\end{align} 
\end{subequations}

With the above limits, we get the following corollary on achievable commitment rate in the NQS model.
\begin{corollary}
    In NQS model characterized by $d$-dimensional identity channel $\cF = \text{id}_d$, the combination of Protocol \ref{pt:chnqs} and Protocol \ref{pt:com} achieves the commitment rate $R_{\mathrm{NQS}}=\frac{1}{2}$.
\end{corollary}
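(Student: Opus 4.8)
The plan is to obtain the claim as a direct application of Theorem~\ref{thm:asym} to the sequence of UsNC models manufactured by Protocol~\ref{pt:chnqs}. For each $n$, run Protocol~\ref{pt:chnqs} with $\lambda_A = \lambda_B = n^{-1/3}$; by Theorem~\ref{thm:nqsc} this yields an $(n,\theta_n)$-UsNC with $p = \sin^2(\pi/8)$ and parameters $\theta_n = (p,\varepsilon_{A,n},l_{A,n},\varepsilon_{B,n},l_{B,n})$ given by \eqref{eq:LEP}. Feeding this UsNC into the protocol family $\Phi_n[\varepsilon_n,\cC_n,F_n]$ constructed in the proof of Theorem~\ref{thm:asym} and pre-composing with Protocol~\ref{pt:chnqs} produces, for each $n$, a commitment protocol in the NQS model on message set $\cM_n$ with block length $n$ whose completeness, hiding, and binding parameters are exactly those of $\Phi_n$ evaluated on the constructed $(n,\theta_n)$-UsNC; this inheritance is precisely the content of the corollary immediately preceding. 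Since Protocol~\ref{pt:chnqs} uses one transmitted qubit per honest BSC use, the NQS rate $\log|\cM_n|/n$ coincides with the UsNC rate, so any rate achievable for the sequence $\{(n,\theta_n)\}$ is an achievable NQS commitment rate.

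\noindent\textbf{Evaluating the parameters.} Next I would record the limits \eqref{eq:LEPLIM}: $\varepsilon_{A,n},\varepsilon_{B,n}\to 0$ is immediate from \eqref{eq:EPA} and \eqref{eq:EPB}; $l_{A,n}/n\to h(\sin^2(\pi/8))$ from \eqref{eq:LAE}; and $l_{B,n}/n\to \frac{1}{2}$ from \eqref{eq:LBE} together with the bound $P_{\mathrm{succ}}^{\mathrm{id}_d}(nR)\le 2^{-nR+d}$ established above (the complementary lower bound $P_{\mathrm{succ}}^{\mathrm{id}_d}(nR)\ge d\cdot 2^{-nR}$, obtained by reserving $\lfloor\log_2 d\rfloor$ of the message bits for orthogonal code states, pins the limit down to $\frac{1}{2}$). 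Hence we set $\xi_A = h(\sin^2(\pi/8))$ and $\xi_B = \frac{1}{2}$, and check conditions \eqref{eq:CC1}--\eqref{eq:CC3} of Theorem~\ref{thm:asym}: $\xi_A = h(p)$ is the right endpoint of $[2p,h(p)]$, and $\xi_B = \frac{1}{2}\le h(\sin^2(\pi/8))\approx 0.60$, so $\xi_B\in[0,h(p)]$.

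\noindent\textbf{Computing the rate.} From the definition \eqref{eq:FGP} one has $g_p(0) = h(p)$, and since $g_p$ is strictly decreasing this forces $g_p^{-1}(\xi_A) = g_p^{-1}(h(p)) = 0$; hence the binding-loss term vanishes, $h\bigl(2g_p^{-1}(\xi_A)\bigr) = h(0) = 0$, and Theorem~\ref{thm:asym} delivers the achievable rate $R = \xi_B - 0 = \frac{1}{2} = R_{\mathrm{NQS}}$. The argument is essentially bookkeeping once Theorems~\ref{thm:nqsc} and~\ref{thm:asym} are in hand; the only substantive analytic input is the capacity-type estimate on $P_{\mathrm{succ}}^{\mathrm{id}_d}$ that drives $l_{B,n}/n\to\frac{1}{2}$ (already supplied above), and the one place deserving care is verifying numerically that $\frac{1}{2}\le h(\sin^2(\pi/8))$ so that the target $\xi_B$ is admissible in Theorem~\ref{thm:asym}.
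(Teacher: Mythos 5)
Your proposal is correct and follows essentially the same route as the paper: instantiate the UsNC via Theorem~\ref{thm:nqsc} with $\lambda_A=\lambda_B=n^{-1/3}$, take the limits \eqref{eq:LEPLIM} to get $\xi_A=h(p)$ and $\xi_B=\tfrac12$, and apply Theorem~\ref{thm:asym} with $g_p^{-1}(h(p))=0$. Your added lower bound $P_{\mathrm{succ}}^{\mathrm{id}_d}(nR)\ge d\cdot 2^{-nR}$ and the explicit check $\tfrac12\le h(\sin^2(\pi/8))$ are small refinements the paper leaves implicit, but they do not change the argument.
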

\begin{proof}
For the combined protocol, the rate $R_{\text{NQS}}$ equals the rate defined in equation~\eqref{eq:ratedef}. Therefore, we can apply Theorem~\ref{thm:asym}.
Substituting $p = \sin^2(\pi/8)$ and equations~\eqref{eq:LEPLIM} into Theorem~\ref{thm:asym}, we have $\xi_A = h(p), \xi_B = \frac{1}{2}$. Then the following rate is achievable:
\begin{align*}
    R_{\mathrm{NQS} }= \frac{1}{2} - h(2g_p^{-1}(h(p))).
\end{align*}
By the definition of function $g_p(\cdot)$ in equation~\eqref{eq:FGP}, 
we observe that $g_p(0) = h(p)$ and $g_p^{-1}(h(p))=0$. Therefore, the achievable rate simplifies to $R_{\mathrm{NQS}} = \frac{1}{2}$.
\end{proof}

\subsection{Proof of Theorem~\ref{thm:nqsc}} \label{sec:proofnqs}
By Definition~\ref{def:UsNC}, proving Theorem~\ref{thm:nqsc} amounts to proving the channel constructed in Protocol~\ref{pt:chnqs} satisfies conditions \textbf{(C1)}, \textbf{(C2)} and \textbf{(C3)} with the given parameter list $\theta = (\sin^2(\pi/8),l_A,\varepsilon_A, l_B, \varepsilon_B)$.
In the honest case, Alice and Bob are essentially playing a CHSH game, with the winning condition being $Z_i = X_i$. The channel mapping $X^n$ to $Z^n$ is an $n$-fold i.i.d. binary symmetric channel with error probability $p = \sin^2(\pi/8)$.
Therefore, condition \textbf{(C1)} is satisfied.
Next, we discuss the channel properties with dishonest Bob and dishonest Alice.

\subsubsection{Dishonest Bob case}

Alice's part of Protocol \ref{pt:chnqs} is identical to that of the Weak String Erasure protocol from \cite[Protocol 1]{konig2012unconditional}. Consequently, the bound on dishonest Bob, stated in Theorem~\ref{thm:BDH}, also applies in our setting. This theorem directly shows that condition \textbf{(C3)} of Definition~\ref{def:UsNC} is satisfied.

\begin{theorem}[Theorem III.3 of \cite{konig2012unconditional}, adapted] \label{thm:BDH}
    Let $\lambda_B \in (0,\frac{1}{2})$. 
    In Protocol~\ref{pt:chnqs}, for any attack by dishonest Bob using 
    noisy storage $\cF: \cB(\cH_{\text{in}}) \to \cB(\cH_{\text{out}})$, let $B$ be all the information Bob has after receiving $\Theta^n$ from Alice and $W^B:\cX^n \to \cD(\cH_{B})$ be the resulted cq channel. Then 
    \begin{align} 
        H_{\min}^{\varepsilon_B}(\bX|B)_{W^B \times U_{\cX^n}} \ge - \log P_{\mathrm{succ}}^{\cF}((\frac{1}{2}-\lambda_B)n), \label{eq:HB}
    \end{align}
    where 
    \begin{align}
        \varepsilon_B = 2 \exp \left(- \frac{(\lambda_B/4)^2n}{32(2+ \log \frac{4}{\lambda_B})^2} \right). \notag
    \end{align}
\end{theorem}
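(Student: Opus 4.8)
Since this theorem is explicitly billed as an adaptation of the dishonest-Bob half of the Weak String Erasure (WSE) security analysis in \cite{konig2012unconditional}, the plan is not to re-derive the underlying entropic bound but to (i) match our setting to theirs and (ii) transcribe the constants. The first step is the observation flagged in the paragraph preceding the theorem: Alice's actions in Protocol~\ref{pt:chnqs} --- sample a uniform basis string $\Theta^n$ and uniform data string $X^n$, send the BB84 states $H^{\Theta_i}\ket{X_i}$, wait the delay $T$, then announce $\Theta^n$ --- are exactly Alice's part of the WSE protocol of \cite[Protocol~1]{konig2012unconditional}, while steps~2 and~4 of Protocol~\ref{pt:chnqs} act on Bob's registers only and, being operations a dishonest Bob could perform anyway, cannot increase his knowledge of $\bX$. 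Hence a dishonest Bob here faces precisely the same constraint: any quantum register surviving past time $T$ must be pushed through $\cF$, after which he learns $\Theta^n$ and may post-process freely. Consequently the cq state analysed in \cite{konig2012unconditional} with a uniform data register is exactly $W^B \times U_{\cX^n}$ as in \eqref{eq:HB}, where $B$ gathers all of Bob's classical and post-$\cF$ quantum side information at the moment he holds $\Theta^n$.

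Granting this identification, Theorem~III.3 of \cite{konig2012unconditional} supplies \eqref{eq:HB} directly. Its argument combines an entropic uncertainty relation for BB84-encoded bits with the operational quantity $P_{\mathrm{succ}}^{\cF}$ --- the optimal probability of sending a classical message through $\cF$ --- to bound how much a dishonest Bob's post-$\cF$ system can reveal about $\bX$, and then smooths away a low-probability sampling event via a concentration estimate, which is what produces the stated $\varepsilon_B$. The conclusion $H_{\min}^{\varepsilon_B}(\bX \mid B)_{W^B \times U_{\cX^n}} \ge -\log P_{\mathrm{succ}}^{\cF}\big((\tfrac{1}{2}-\lambda_B)n\big)$ is verbatim condition \textbf{(C3)} of Definition~\ref{def:UsNC} with $l_B$ as in \eqref{eq:LBE}, so this finishes the dishonest-Bob part of the proof of Theorem~\ref{thm:nqsc}.

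The one step that genuinely needs care --- and hence the main obstacle --- is parameter bookkeeping: \cite{konig2012unconditional} phrases its guarantee with a different parameterisation (a min-entropy \emph{rate} together with an explicit sampling error), so one must check that substituting our free parameter $\lambda_B$ reproduces both the argument $(\tfrac{1}{2}-\lambda_B)n$ of $P_{\mathrm{succ}}^{\cF}$ and the exact smoothing parameter $\varepsilon_B = 2\exp\!\big(-(\lambda_B/4)^2 n \,/\, (32(2+\log \tfrac{4}{\lambda_B})^2)\big)$ claimed in \eqref{eq:EPB}. No new inequality needs to be established; the content is this translation together with the cited theorem.
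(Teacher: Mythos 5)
Your proposal matches the paper's own treatment: the paper likewise gives no independent derivation, but observes that Alice's part of Protocol~\ref{pt:chnqs} is identical to Alice's part of the Weak String Erasure protocol of \cite[Protocol 1]{konig2012unconditional}, so that Theorem~III.3 there applies verbatim and yields condition \textbf{(C3)} with the stated $l_B$ and $\varepsilon_B$. Your additional remark that Bob's local steps 2 and 4 cannot help a dishonest Bob, and your caution about transcribing the parameters, are consistent with (and slightly more explicit than) what the paper writes.
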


\subsubsection{Dishonest Alice case}
To show the bound in \eqref{eq:ch2}, note that for any set $\cX$ and any element $x\in \cX$, 
the general strategy of Alice is to prepare and send a cq state $\rho_{\Theta^n B^n|x}$ to Bob.
Then it is sufficient to show that for any cq state $\rho_{\Theta^n B^n}$, Bob's final output $\bZ$ satisfies $H_{\min}^{\varepsilon_A}(\bZ) \ge l_A$.
For this purpose, we break down the $n$ rounds of Bob's measurement and use state-independent entropic uncertainty bound.

In the $i$th round, honest Bob measures the system $B_i$ and compute $Z_i$ as described in step 4 of Protocol~\ref{pt:chnqs}. Bob's measurement together with the post-processing is essentially a POVM $\{E_z\}$ described as follows.
\begin{align*}
    E_0 &= \frac{1}{2} \left[ \pi_{0} \otimes \left( \pi_{0} + \pi_{+} \right) +  \pi_{1} \otimes \left( \pi_{1} + \pi_{+} \right) \right],\\
    E_1 &= \frac{1}{2} \left[ \pi_{0} \otimes \left( \pi_{1} + \pi_{-} \right) +  \pi_{1} \otimes \left( \pi_{0} + \pi_{+} \right) \right].
\end{align*}
where $\pi_x = \proj{x}$. 

Assume the result of previous $i-1$ rounds is $z^{i-1}$.
We bound the min-entropy of $Z_i$ conditioned on $Z^{i-1} =z^{i-1}$ with the entropic uncertainty relation~\cite{tomamichel2011uncertainty}, which states that, for two POVMs $\{M_z\}, \{N_x\}$,
\begin{align*}
    &H_{\min}(Z) + H_{\max}(X) \ge \log \frac{1}{c},\\
    &c = \max_{x,z} \| \sqrt{M_z} \sqrt{N_{x}} \|_{\infty}^2,
\end{align*}
where $X$ and $Z$ are the measurement result of the POVMs. Setting $\{M_z\} = \{E_0, E_1\}, \{N_x\} = \{\mathbb{I}\}$, we have 
\begin{align*}
    H_{\min} (Z_i | Z^{i-1} = z^{i-1}) \ge \log \frac{1}{c}, 
\end{align*}
where $c = \cos^2(\pi/8)$.
With the following argument, we transform the above bound on min-entropy into bound on Shannon entropy.
\begin{align}
    & H_{\min}(Z_i|Z^{i-1} = z^{i-1}) \ge \log \frac{1}{c} \notag\\
    \implies & \max_{z\in \{0,1\}} P_{Z_i|Z^{i-1}=z^{i-1}}(z) \le c \notag \\
    \implies & H(Z_i|Z^{i-1} = z^{i-1}) \ge h(c). \label{eq:SCS}
\end{align}

To bound the min-entropy of $Z^n$, we use the following theorem derived from Azuma's inequality.
\begin{theorem}[Theorem 3.1 of \cite{damgaard2007tight}] \label{thm:amz}
    Let $\bZ = Z_1 \dots Z_n$ be $n$ (not necessarily independent) random variables over alphabet $\cZ$, and let $h\ge 0$ be such that
    \begin{align*}
        H(Z_i|Z^{i-1} = z^{i-1})\ge h
    \end{align*}
    for all $i \in [n]$ and $z_1, \dots, z_n \in \cZ$. Then for any $0<\lambda<\frac{1}{2}$,
    \begin{align} \label{eq:amz}
        H_{\min}^{\varepsilon}(\bZ) \ge (h-2\lambda)n,
    \end{align}
    where $ \varepsilon= \exp \left( \frac{\lambda^2 n}{32 \log ( |\cZ|/ \lambda)^2} \right)$.
\end{theorem}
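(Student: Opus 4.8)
The statement is a smooth asymptotic-equipartition / min-entropy accumulation fact: a per-coordinate lower bound $h$ on the conditional Shannon entropy, uniform over the past, should lift to a lower bound $(h-2\lambda)n$ on the smooth min-entropy of $\bZ$, with the rate loss $2\lambda$ paying for the smoothing error $\varepsilon$. The plan is to reduce this to a lower-tail bound on the surprisal $-\log P_{\bZ}(\bZ)$ and then establish that bound with a martingale concentration inequality. For the reduction, I would write $P_{\bZ}$ for the law of $Z_1\cdots Z_n$, set $\mathcal{L}=\{\bz:P_{\bZ}(\bz)>2^{-(h-2\lambda)n}\}$, and form the subnormalized distribution $\tilde P_{\bZ}\in\Delta_{\bullet}(\cZ^n)$ by zeroing $P_{\bZ}$ on $\mathcal{L}$; then $\GTD(\tilde P_{\bZ},P_{\bZ})=\Pr_{\bZ}[\bZ\in\mathcal{L}]$ and $\max_{\bz}\tilde P_{\bZ}(\bz)\le 2^{-(h-2\lambda)n}$, so $H_{\min}(\tilde P_{\bZ})\ge(h-2\lambda)n$, and it suffices to show $\Pr_{\bZ}[-\log P_{\bZ}(\bZ)\le(h-2\lambda)n]\le\varepsilon$.

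For the tail bound, I would decompose $-\log P_{\bZ}(\bZ)=\sum_{i=1}^n T_i$ via the chain rule, with $T_i:=-\log P_{Z_i\mid Z^{i-1}}(Z_i\mid Z^{i-1})\ge 0$ and $\mathbb{E}[T_i\mid Z^{i-1}=z^{i-1}]=H(Z_i\mid Z^{i-1}=z^{i-1})\ge h$ for every history. Since the $T_i$ are unbounded above, I would truncate at a level $\kappa=\Theta(\log(|\cZ|/\lambda))$ (for instance $\kappa=\log(|\cZ|/\lambda)+\log(2\log(|\cZ|/\lambda))$) and work with $\bar T_i=\min(T_i,\kappa)\in[0,\kappa]$: using that $x\mapsto x\log(1/x)$ is increasing on $(0,1/e)$ and that at most $|\cZ|$ symbols can have conditional probability below $2^{-\kappa}$, the truncation loss obeys $\mathbb{E}[(T_i-\kappa)^+\mid Z^{i-1}]\le|\cZ|\,\kappa\,2^{-\kappa}\le\lambda$, so $\mathbb{E}[\bar T_i\mid Z^{i-1}]\ge h-\lambda$ while $\sum_iT_i\ge\sum_i\bar T_i$. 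Then I would apply Azuma--Hoeffding to the martingale $M_k=\sum_{i\le k}(\bar T_i-\mathbb{E}[\bar T_i\mid Z^{i-1}])$, whose increments lie in an interval of width $\le\kappa$, getting $\Pr[M_n\le-\lambda n]\le\exp(-2\lambda^2 n/\kappa^2)$; on the complementary event $\sum_iT_i\ge(h-\lambda)n-\lambda n=(h-2\lambda)n$, so $\Pr_{\bZ}[\bZ\in\mathcal{L}]\le\exp(-2\lambda^2 n/\kappa^2)$, and since $\kappa\le 8\log(|\cZ|/\lambda)$ this is at most $\exp\!\big(-\lambda^2 n/(32(\log(|\cZ|/\lambda))^2)\big)=\varepsilon$. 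Combined with the reduction, $H_{\min}^\varepsilon(\bZ)\ge(h-2\lambda)n$. This is essentially the original argument of \cite{damgaard2007tight}; restating it is worthwhile only to pin down the explicit constants, which feed into $\varepsilon_A$ in Theorem~\ref{thm:nqsc}.

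The one genuinely delicate point, and where I expect the real work to be, is the unbounded surprisal: the truncation level $\kappa$ must be chosen large enough to keep each per-round loss below $\lambda$ yet small enough that $\exp(-2\lambda^2n/\kappa^2)$ still fits under the target $\varepsilon=\exp(-\lambda^2n/(32(\log(|\cZ|/\lambda))^2))$. Verifying that a single $\kappa=\Theta(\log(|\cZ|/\lambda))$ achieves both, with the constants landing comfortably inside the factor $32$ (which absorbs the choice of $\kappa$ and the Azuma--Hoeffding constant), is the careful step; the deletion-of-heavy-strings reduction and the martingale estimate are otherwise routine.
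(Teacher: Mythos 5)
The paper does not prove this statement; it is imported as Theorem~3.1 of \cite{damgaard2007tight}, whose proof is precisely the Azuma-based argument you reconstruct: chain-rule decomposition of the surprisal into conditional surprisals $T_i$, truncation of the unbounded increments at a level $\kappa=\Theta(\log(|\cZ|/\lambda))$ so the per-round truncation loss is at most $\lambda$, Azuma--Hoeffding on the resulting bounded-increment martingale, and finally deletion of the heavy strings to pass from the tail bound on $-\log P_{\bZ}(\bZ)$ to the smooth min-entropy. Your reconstruction is correct, and the constant-chasing you flag does close: the increment of $\bar T_i-\mathbb{E}[\bar T_i\mid Z^{i-1}]$ conditionally lies in an interval of width $\kappa$, and $\kappa\le 8\log(|\cZ|/\lambda)$ puts $\exp(-2\lambda^2n/\kappa^2)$ under the target. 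One small point worth recording: the statement as printed omits the minus sign in the exponent of $\varepsilon$; it should read $\varepsilon=\exp\bigl(-\lambda^2 n/(32\log(|\cZ|/\lambda)^2)\bigr)$, consistent with your version and with the application in equation \eqref{eq:EPA}.
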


Combining Theorem~\ref{thm:amz} and inequality~\eqref{eq:SCS}, we have
\begin{align}
    \forall \lambda_A \in (0,\frac{1}{2} ),  H_{\min}^{\varepsilon_A}(\bZ) \ge (h(\sin^2(\pi/8))-2\lambda_A)n, 
\end{align}
where $\varepsilon_A = \exp \left( \frac{\lambda_A^2 n}{32 (1 - \log \lambda_A)^2} \right)$. This result shows that condition \textbf{(C2)} of Definition~\ref{def:UsNC} is satisfied.
\section{Conclusion}
In this work, we introduced the UsNC model, an adversarial channel model for two-party cryptographic tasks.
This model bridges two distinct approaches to commitment by providing a general information-theoretic structure that can be instantiated by concrete physical assumptions.
The UsNC model generalizes the i.i.d. assumption in the unfair noisy channel model by constraining the adversarial channel with global entropic bounds, thereby allowing arbitrary memory effects across channel uses.

Within the UsNC model, we constructed a string commitment protocol and proved its security. We derived the completeness error, the hiding parameter and the binding parameter of the protocol. 
Under the asymptotic condition in Theorem \ref{thm:asym}, we derived the commitment rate, which is determined by the channel noise parameter $p$ and the asymptotic ratios $l_A/n, l_B/n$.
Our commitment rate simplifies to the known commitment capacity $C=h(p)$ in earlier results~\cite{winter2003commitment,imai2004rates} when the channel $W^A$ and $W^B$ are fixed as $n$-fold BSC.
We then showed that the UsNC model can be instantiated by the physical assumption of noisy quantum storage. 
The required entropic constraints of the UsNC model are derived from physical principles in the NQS setting.

The assumption of a $n$-fold BSC for the honest parties, as specified in condition \textbf{(C1)} of Definition~\ref{def:UsNC}, is a  limitation of the UsNC model.
This assumption is not required for the protocol's completeness, because the proof relies on standard properties of typical sets (Lemma~\ref{lem:unit}) that hold for general discrete memoryless channels.
Instead, the limitation stems from the proof for the binding property.
Specifically, the bound on the binding parameter relies on Lemma~\ref{lem:typ}, whose proof requires the specific structure of BSC. 
Therefore, if Lemma~\ref{lem:typ} can be relaxed to a general discrete memoryless channel, then the UsNC model can be generalized accordingly.
This extension is left open for future study.

\section{Acknowledgement}
JW and MT were supported by the National Research Foundation, Singapore and A*STAR under its Quantum Engineering Programme (NRF2021-QEP2-01-P06).
MH was supported in part by the National Natural Science Foundation of China (Grant No. 62171212) and the General R\&D Projects of 1+1+1 CUHK-CUHK(SZ)-GDST Joint Collaboration Fund (Grant No. GRDP2025-022).

\appendices

\section{Proof of Lemma~\ref{lem:typ}} \label{app:pf}
\begin{lemma*}
      Let $W$ be a BSC with transition probability $p$ ($p<\frac{1}{2}$), and $\bx,\by \in \{0,1\}^n$ be two binary sequences with Hamming distance $d_H(\bx,\by) = 2\sigma n$. For any $ 0 < \varepsilon < \frac{1}{2}-p$,
    \begin{align}
        |\cT_{W,\varepsilon}^n (\bx) \cap \cT_{W,\varepsilon}^n(\by)| \le
        \left\{
        \begin{array}{ll}
         (\sqrt{2}\varepsilon n + 1)^2 2^{ n \left[ 
        (1-2\sigma) h\left( \frac{p- \sigma + \varepsilon}{1-2\sigma}\right) + 2\sigma \right]},  &\textnormal{if $ \sigma \le p + 2\varepsilon$ }   \\
        0, & \textnormal{if $\sigma > p + 2\varepsilon$}
        \end{array}
        \right. \label{eq:LEM2}
    \end{align}
\end{lemma*}

\begin{proof}

\noindent \textbf{Step 1}: Preparation.

Because of the translation invariance property in equation \eqref{eq:SYM}, we have
\begin{align}
   & \bz \in \cT_{W,\varepsilon}^n (\bx) \cap \cT_{W,\varepsilon}^n(\by) \notag\\
   \iff & \bz - \bx \in \cT_{W,\varepsilon}^n (\mathbf{0}) \cap \cT_{W,\varepsilon}^n(\by-\bx) \notag \\
   \iff &\left| \cT_{W,\varepsilon}^n (\bx) \cap \cT_{W,\varepsilon}^n(\by) \right| = \left| \cT_{W,\varepsilon}^n (\mathbf{0}) \cap \cT_{W,\varepsilon}^n(\by-\bx)\right| \label{eq:HDD}
\end{align}

Denote $\by'= \by -\bx$.
Partition the set $[n] \coloneqq \{1,2,\dots,n\}$ positions into two sectors
\begin{align*}
    \cI_{i} \coloneqq \{k \in [n] \mid \by' = i \}  \textnormal{ for } i = 0,1.
\end{align*}

Denote the size of the sector $|\cI_{i}|$ by $\sigma_{i} n$. Because $\HD(\bx,\by)= 2\sigma n$, we have 
\begin{align}
    \sigma_0 = 1-2\sigma, \sigma_1 = 2\sigma.
\end{align}

For any sequence $\bz \in \{0,1\}^n$ and sector $\cI \subseteq [n]$, we define the following weight function, which counts the number of $1$s
at position $\cI$.
\begin{align*}
        \wt_{\cI}(\bz) \coloneqq \sum_{x \in \cI} z_x.
\end{align*}
Then the condition in \eqref{eq:HDD} can be expressed as 
\begin{align*}
    &\frac{1}{n} \left[ \wt_{\cI_0}(\bz) + \wt_{\cI_1}(\bz) \right] \in [p-\varepsilon, p+\varepsilon], \\
    &\frac{1}{n} \left[ \wt_{\cI_0}(\bz) + \sigma_1 n - \wt_{\cI_1}(\bz) \right] \in [p-\varepsilon, p+\varepsilon]
\end{align*}

Consider a tuple $\Vec{b} = (b_0, b_1) \in \mathbb{N}^2$.
If $\Vec{b} \in [0,\sigma_{0}n] \times [0,\sigma_{1} n]$,
then $\Vec{b}$ defines a class in $\{0,1\}^n$:
\begin{align*}
    G(\Vec{b}) \coloneqq \{\bz \in \{0,1\}^n \mid \forall i \in \{0,1\}, \wt_{\cI_{i}}(\bz) = b_{i} \},
\end{align*}
Define the set
    \begin{align*}
        \Lambda \coloneqq & \left\{ \vphantom{\frac{1}{n}} \Vec{b} \in \mathbb{N}^2 \middle\vert \smash{ \frac{\Vec{b}}{n} }\in [0,\sigma_{0}] \times [0,\sigma_{1}] \right.\\
        & \frac{1}{n}(b_0 + b_1) \in \left[p -\varepsilon , p + \varepsilon \right] \land \\
        & \hspace{-0.3em} \left. \frac{1}{n}(b_0 + 2\sigma - b_1) \in \left[p - \varepsilon , p + \varepsilon \right] \right\}
    \end{align*}
    and its continuous version
    \begin{equation} \label{eq:sqr}
    \begin{aligned}
        \Lambda' \coloneqq & \left\{ \Vec{a} \in [0,\sigma_{0}] \times [0,\sigma_{1}] \mid \right.\\
        & a_0 + a_1\in \left[p -\varepsilon ,p+ \varepsilon \right] \land \\
        & \hspace{-0.3em} \left. a_0 + 2\sigma -a_1 \in \left[p - \varepsilon , p + \varepsilon \right] \right\},
    \end{aligned}        
    \end{equation}
    where $\Vec{a} = (a_0,a_1)$. Then we have 
    \begin{align*}
        \bz \in \cT_{W,\varepsilon}^n (\mathbf{0}) \cap \cT_{W,\varepsilon}^n(\by') \implies
        \bz \in \bigcup_{\Vec{b} \in \Lambda} G(\Vec{b}).
    \end{align*}

\noindent \textbf{Step 2}: Transform the estimation of $|\cT_{W, \varepsilon}^n(\bx) \cap \cT_{W,\varepsilon}^n(\by)|$ into an optimization problem.

With the above notations, the target quantity can be re-written as 
\begin{align}
    |\cT_{W,\varepsilon}^n (\bx) \cap \cT_{W,\varepsilon}^n(\by)| & \le \left| \bigcup_{\Vec{b} \in \Lambda} G(\Vec{b}) \right| \notag\\
    &\le |\Lambda| \cdot \max_{\Vec{b} \in \Lambda} |G(\Vec{b})|.  \label{eq:STPD2}
\end{align}
Note that  $\Lambda = \varnothing $ if $\sigma > p + 2\varepsilon$, thus the second case of inequality \eqref{eq:LEM2} is obtained.

If $\sigma \le p + 2\varepsilon$, we first evaluate the first term of \eqref{eq:STPD2}. The size of $\Lambda$ is essentially the number of lattice points enclosed by a square with diagonal length $2\varepsilon n$, therefore we have
\begin{align}
    |\Lambda| \le (\sqrt{2}\varepsilon n + 1)^2. \label{eq:area}
\end{align}

Then we estimate the second term of \eqref{eq:STPD2} as follows:
 \begin{align}
        \max_{\Vec{b} \in \Lambda} \left|G(\Vec{b}) \right| &= \max_{\Vec{b} \in \Lambda}  \binom{\sigma_{0}n}{b_{0}} 
        \binom{\sigma_{1}n}{b_{1}}\notag\\
        & \le \max_{\Vec{b} \in \{n\Vec{a} \mid \Vec{a} \in \Lambda'\}} \binom{\sigma_{0}n}{b_{0}} 
        \binom{\sigma_{1}n}{b_{1}} \label{eq:LNS}\\
        &= \max_{\Vec{a} \in \Lambda'} \binom{\sigma_{0}n}{a_{0}n} 
        \binom{\sigma_{1}n}{a_{1}n}\notag \\
        & \le  2^{n \max_{\Vec{a} \in \Lambda'}  \left[ \sigma_{0} h(\frac{a_{0}}{\sigma_{0}}) + \sigma_{1} h(\frac{a_{1}}{\sigma_{1}}) \right]} \label{eq:BIN}
    \end{align}
where inequality \eqref{eq:LNS} holds because $\Lambda \subseteq  \{n\Vec{a}\mid \Vec{a} \in \Lambda'\}$, and inequality \eqref{eq:BIN} follows from the property of binomial coefficient [Cite somewhere].

\noindent \textbf{Step 3}: Optimize the exponent term of \eqref{eq:BIN}.

The set $\Lambda'$ is essentially the intersection of two the squares on $\mathbb{R}^2$ plane:
\begin{align}
    \Lambda' = \Lambda'_{\mathrm{sq}} \cap [0,1-2\sigma] \times [0,2\sigma],
\end{align}
where $\Lambda'_{\mathrm{sq}}$ is the set defined by the linear constraints in equation~\eqref{eq:sqr}.


Then equation~\eqref{eq:BIN} is bounded as
\begin{align}
    \max_{\Vec{a} \in \Lambda'}  \left[ \sigma_{0} h(\frac{a_{0}}{\sigma_{0}}) + \sigma_{1} h(\frac{a_{1}}{\sigma_{1}}) \right] 
    &\le  \max_{\Vec{a} \in \Lambda'_{\mathrm{sq}}}  \left[ (1-2\sigma) h(\frac{a_{0}}{1-2\sigma}) + 2\sigma \right] \label{eq:NC1}\\
    &=  (1-2\sigma) h(\frac{p-\sigma + \varepsilon}{1-2\sigma}) + 2\sigma. \label{eq:NC2}
\end{align}
The equality of equation~\eqref{eq:NC2} is justified as follows.
Because $\varepsilon < \frac{1}{2}-p$, we have $(a_0,a_1) \in \Lambda'_\mathrm{sq} \implies a_0 \le p-\sigma +\varepsilon < \frac{1-2\sigma}{2}$. Then 
the function $h(\frac{a_0}{1-2\sigma})$ is monotonically increasing, and equation~\eqref{eq:NC1} reaches maximum when $a_0 = p-\sigma + \varepsilon$.
Combining equations \eqref{eq:STPD2} , \eqref{eq:area}, \eqref{eq:BIN} and \eqref{eq:NC2} yields the desired result in equation~\eqref{eq:LEM2}.
\end{proof}

\section{Proof of Lemma~\ref{lem:iidc}} \label{app:plem}
\begin{lemma*}
    For $n$-fold BSC $W^n: \cX^n \to \cZ^n$ ($\cX=\cZ =\{0,1\}$) with transition probability $p$, there exist parameters $\mu_n, l(n,p)$ such that 
    \begin{align}
    H_{\min}^{\mu_n} (\bZ)_{W^n(\bx)} & \ge l(n,p) \textnormal{ for all } \bx \in \{0,1\}^n, \label{eq:cha} \\
    H_{\min}^{\mu_n}(\bX |\bZ)_{W^n \times U_{\cX^n} } & \ge l(n,p), \label{eq:chr}
    \end{align}
    and
    \begin{align}
        & \lim_{n \to \infty} \mu_n = 0, \label{eq:DET0}\\
        & \lim_{n \to \infty} \frac{l(n,p)}{n} = h(p). \label{eq:DETH}
    \end{align}
\end{lemma*}

\begin{proof}
We will prove the lemma in three steps. 

\noindent \textbf{Step 1}: Set values of $\mu_n$ and $l(n,p)$.

Take $\varepsilon = n^{-1/3}, \mu_n = 8 \cdot 2^{-\sqrt[3]{n}}, l(n,p) =n \left( h(p) - c \varepsilon \right)$, where $c =\log ((1-p)/p)$ is a constant. Then equation \eqref{eq:DET0} and \eqref{eq:DETH} are obvious. 

\noindent \textbf{Step 2}: Prove inequality \eqref{eq:cha}.

By Lemma \ref{lem:unit}, we have
\begin{align} \label{eq:unit2}
    \Pr_{\bZ \sim W^n(\bx)}[\bZ \in \cT_{W, \varepsilon}^n(\bx)] = \sum_{\bz \in \cT^n_{W,\varepsilon}(\bx)} W^n(\bz|\bx) \ge 1 - \mu_n. 
\end{align}
Let $Q_{\bX\bZ}$ be a subnormalized distribution on $\cX^n \times \cZ^n$ satisfying 
\begin{align*}
    Q_{\bX\bZ}(\bx,\bz) = \frac{1}{2^n}Q_{\bZ|\bX}(\bz|\bx),
\end{align*}
where 
\begin{align} \label{eq:CST}
    Q_{\bZ|\bX}(\bz|\bx) =\begin{cases}
        0, &\text{if } \bz \notin \cT_{W, \varepsilon}^n(\bx)\\
        W^n(\bz|\bx), &\text{if } \bz \in \cT_{W, \varepsilon}^n(\bx)
    \end{cases}.
\end{align}
Then for any fixed $\bx  \in \{0,1\}^n $
\begin{align}    
&{} \GTD(Q_{\bZ|\bX=\bx}, W^n(\bx)) \notag\\
={}& \frac{1}{2} \sum_{\bz \notin \cT^n_{W,\varepsilon}(\bx)}  W^n(\bz|\bx)  + \frac{1}{2} \left( 1 - \sum_{\bz \in \cT^n_{W,\varepsilon}(\bx)} W^n(\bz|\bx) \right) \notag\\ 
\le{}& \mu_n, \label{eq:tdx}
\end{align}

Hence, $\forall \bx \in \{0,1\}^n$, we have
\begin{flalign} 
   &&H_{\min}^{\mu_n}(\bZ)_{W^A(\bx)} & \ge H_{\min}(\bZ)_{Q_{\bZ|\bX=\bx}} && \llap{\text{[From  inequality \eqref{eq:tdx}]}} \notag\\
   && &= -\log \max_{\bz \in \{0,1\}^n} Q_{\bZ|\bX}(\bz|\bx) && \llap{\text{[From equation \eqref{eq:defmin}]}} \notag \\
   && &= -\log \max_{\bz \in \cT_{W, \varepsilon}^n(\bx) } W^n(\bz|\bx) && \llap{\text{[From equation \eqref{eq:CST}]}}\notag \\
   && &= -\log \max_{\bz \in \cT_{W, \varepsilon}^n(\bx) } p^{\HD(\bx,\bz)} (1-p)^{n - \HD(\bx,\bz)} && \notag\\
   && &\ge n( h(p) - \varepsilon \log \frac{1-p}{p}) = l(n, p), && \label{eq:XBS}
\end{flalign}
where the last inequality follows from equation \eqref{eq:defsct}.

\noindent \textbf{Step 3}: Prove inequality \eqref{eq:chr}.

From inequality \eqref{eq:tdx} we have 
\begin{align}    
\GTD(Q_{\bX\bZ}, W^n\times U_{\cX^n}) = \frac{1}{2^n}  \sum_{\bx \in \{0,1\} } \GTD(Q_{\bZ|\bX=\bx}, W^n(\bx))
\le{} & \mu_n, \label{eq:tdx2}
\end{align}

Inequality \eqref{eq:chr} is obtained as follows:
\begin{flalign*}
    && H_{\min}^{\mu_n}(\bX |B)_{W^n \times U_{\cX^n} } &\ge H_{\min}(\bX |B)_{Q_{\bX\bZ}} && \llap{\text{[From  inequality \eqref{eq:tdx2}]}} \\
    && &= -\log \sum_{\bz \in \{0,1\}^n} \max_{\bx \in \{0,1\}^n} \frac{1}{2^n} Q_{\bZ|\bX}(\bz| \bx) && \llap{\text{[From equation \eqref{eq:def-c}]}}  \\
    && & \ge l(n,p).
\end{flalign*}
The last inequality holds because $\forall \bx,\bz \in \{0,1\}^n$, $Q_{\bZ|\bX}(\bz| \bx) \le p^{n(p-\varepsilon)}(1-p)^{n(1-p+\varepsilon)}$ .
\end{proof}

\end{document}